\documentclass[11pt,letterpaper]{article}

\usepackage{typearea}
\paperwidth 8.5in \paperheight 11in
\typearea{15}
\addtolength{\textheight}{0.2in}

\usepackage{multirow}
\usepackage{booktabs}
\usepackage{theorem,latexsym,graphicx}
\usepackage{amsmath,amssymb,enumerate}
\usepackage{xspace}
\usepackage{bm}
\usepackage{ifpdf}
\usepackage{shadow,shadethm,color}
\usepackage[compact]{titlesec}
\usepackage{algorithm}
\usepackage{subfigure}
\usepackage{verbatim}
\usepackage{times}
\usepackage{paralist}
\usepackage[noend]{algpseudocode}
\usepackage[bottom]{footmisc}

\definecolor{Darkblue}{rgb}{0,0,0.4}
\definecolor{Brown}{cmyk}{0,0.81,1.,0.60}
\definecolor{Purple}{cmyk}{0.45,0.86,0,0}
\newcommand{\mydriver}{hypertex}
\ifpdf
 \renewcommand{\mydriver}{pdftex}
\fi
\usepackage[breaklinks,\mydriver]{hyperref}
\hypersetup{colorlinks=true,
            citebordercolor={.6 .6 .6},linkbordercolor={.6 .6 .6},%
citecolor=Darkblue,urlcolor=black,linkcolor=Darkblue,pagecolor=black}
\newcommand{\lref}[2][]{\hyperref[#2]{#1~\ref*{#2}}}


\newtheorem{theorem}{Theorem}[section]

\newtheorem{lemma}[theorem]{Lemma}
\newshadetheorem{lemmashaded}[theorem]{Lemma}

\newtheorem{example}[theorem]{Example}

\numberwithin{algorithm}{section}

\newenvironment{proof}{

\noindent{\bf Proof:}}
{\hfill$\blacksquare$

}

\newcommand{\junk}[1]{}
\newcommand{\ignore}[1]{}

\newcommand{\R}[0]{{\ensuremath{\mathbb{R}}}}


\newcommand{\e}{\varepsilon}
\newcommand{\eps}{\varepsilon}

\newcommand{\OPT}{\ensuremath{{\sf Opt}}}


\newcommand{\E}{\mathbb{E}}

\newcommand{\U}{\mathcal{U}}


\newcounter{mynotes}
\setcounter{mynotes}{0}

\newcommand{\mmnote}[1]{}

\newcommand{\qedsymb}{\hfill{\rule{2mm}{2mm}}}
\renewenvironment{proof}{\begin{trivlist} \item[\hspace{\labelsep}{\bf
\noindent Proof.\/}] }{\qedsymb\end{trivlist}}%

\newcommand{\initOneLiners}{%
    \setlength{\itemsep}{0pt}
    \setlength{\parsep }{0pt}
    \setlength{\topsep }{0pt}
}

%

\newcommand{\lpGLoadBal}{$\ell_p$-\textsc{Generalized Load-balance}\xspace}
\newcommand{\GLB}{\textsc{GLB}$_p$\xspace}

\newcommand{\greedy}{\textsc{Greedy}\xspace}
\newcommand{\ultimate}{\textsc{SimultaneousLB}\xspace}
\newcommand{\greedywr}{\textsc{GreedyWR}\xspace}
\newcommand{\smoothgreed}{\textsc{SmoothGreedy}\xspace}
\newcommand{\smoothbaseline}{\textsc{SmoothBaseline}\xspace}
\newcommand{\algo}{\textsf{Algo}\xspace}

\newcommand{\bA}{\mathbf{A}}
\newcommand{\bx}{\mathbf{x}}
\newcommand{\bS}{\mathbf{S}}

\newcommand{\bg}{\mathbf{g}}
\newcommand{\btau}{\bm{\tau}}
\newcommand{\bell}{\bm{\ell}}
\newcommand{\bY}{\mathbf{Y}}
\newcommand{\bZ}{\mathbf{Z}}

\newcommand{\red}[1]{\textcolor{red}{#1}}

\newcommand{\ip}[2]{\langle #1, #2 \rangle\xspace}
\newcommand{\ones}{\mathbf{1}}
\newcommand{\holder}{H\"older\xspace}

\newcommand{\OPTfixed}{\mathsf{OptFixed}\xspace}

\newcommand{\bj}{\mathbf{j}}
\newcommand{\bu}{\mathbf{u}}
\newcommand{\bX}{\mathbf{X}}

\newcommand{\short}[2]{#1}
\newcommand{\remove}[1]{}


\begin{document}

\title{
Online and Random-order Load Balancing Simultaneously}
\author{Marco Molinaro\footnote{Email: mmolinaro@inf.puc-rio.br} \\ PUC-Rio, Brazil}
\date{}

\maketitle

\begin{abstract}
	We consider the problem of online load balancing under $\ell_p$-norms: sequential jobs need to be assigned to one of the machines and the goal is to minimize the $\ell_p$-norm of the machine loads. This generalizes the classical problem of scheduling for makespan minimization (case $\ell_\infty$) and has been thoroughly studied. However, despite the recent push for beyond worst-case analyses, no such results are known for this problem.
	
	In this paper we provide algorithms with \textbf{simultaneous} guarantees for the worst-case model as well as for the random-order (i.e. secretary) model, where an arbitrary set of jobs comes in random order. First, we show that the greedy algorithm (with restart), known to have optimal $O(p)$ worst-case guarantee, also has a (typically) improved random-order guarantee.	However, the behavior of this algorithm in the random-order model degrades with $p$. We then propose algorithm \ultimate that has \textbf{simultaneously optimal guarantees} (within constants) in both worst-case and random-order models. In particular, the random-order guarantee of \ultimate improves as $p$ increases.
	
	One of the main components is a new algorithm with improved regret for Online Linear Optimization (OLO) over the non-negative vectors in the $\ell_q$ ball. Interestingly, this OLO algorithm is also used to prove a purely probabilistic inequality that controls the correlations arising in the random-order model, a common source of difficulty for the analysis. Another important component used in both \ultimate and our OLO algorithm is a smoothing of the $\ell_p$-norm that may be of independent interest. This smoothness property allows us to see algorithm \ultimate as essentially a greedy one in the worst-case model and as a primal-dual one in the random-order model, which is instrumental for its simultaneous guarantees. 
\end{abstract}

	\pagenumbering{gobble}
	\newpage
	

	\pagenumbering{arabic}
	\section{Introduction}

		We study the following classical online \lpGLoadBal (\GLB) problem: There are $m$ machines, and $n$ jobs come one-by-one. Each job can be processed in the machines in $k$ different ways, so the $t$-th job has an $m \times k$ matrix $A^t$ with entries in $[0,1]$ whose column $j$ gives the load $(A^t_{1j}, A^t_{2j}, \ldots, A^t_{mj})$ the machines incur if the job is processed with option $j$. When the $t$-th job arrives, the algorithm needs to select a processing option for it (namely a vector $\bar{x}^t \in \{0,1\}^k$ with exactly one 1) based only on the jobs seen thus far, and the goal is to minimize the $\ell_p$-norm of the total load incurred in the machines $\| \sum_{t = 1}^n A^t \bar{x}^t \|_p$, where $\|u\|_p := (\sum_i u_i^p)^{1/p}$. The performance of the algorithm is compared against the offline optimal solution 
	$\OPT := \min \|\sum_t A^t x^t\|_p$.

		This generalizes the fundamental problem of scheduling on unrelated machines to minimize makespan, which corresponds to the case $\ell_\infty$ (and diagonal matrices $A^t$'s). The generalization to the $\ell_p$-norm has been studied since the 70's \cite{chandra,cody}, since in some applications they better capture how well-balanced an allocation is~\cite{awerbuch}. \mmnote{Citation from vector load bal of Sungjim: the L2 norm of machine loads is suitable for
disk storage [17], [20] while the Lr norm for r 
between 2 and 3 is used for modeling energy consumption [3], [38], [44].}
		
		Optimal (within constants) guarantees for this problem are well-known. Awerbuch et al. \cite{awerbuch} showed that the greedy algorithm that chooses the processing option that least increases the $\ell_p$-load has $O(p)$-competitive ratio, namely the algorithm's load is at most $O(p) \cdot \OPT$. They also provided the following matching lower bound (this is a slightly more general statement, but the proof is basically the same; we present it in Appendix \ref{app:LBWC} for completeness).
		
		\begin{theorem}[Extension of \cite{awerbuch}] \label{thm:LBWC}
		Consider \GLB in the worst-case model. Then for any positive integer $M$, there is an instance where $\OPT = M m^{1/p}$ but any (possibly randomized) online algorithm has expected load at least $\frac{1}{2^{2+1/p}} \cdot p M m^{1/p}$.
		\end{theorem}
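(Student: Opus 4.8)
The plan is to reproduce the classical lower‑bound instance of Awerbuch et al., which is an adaptive adversary built on a nested ``binary tree'' family of machine subsets. I would fix $L := \lceil p\rceil$ and $m := 2^L$, and maintain a chain $[m] = B_1 \supset B_2 \supset \cdots \supset B_{L+1}$ with $|B_i| = 2^{L-i+1}$, so $B_{L+1}$ is a single machine. The jobs are revealed in $L+1$ rounds. In round $i \le L$ the adversary releases $M\cdot|B_i|/2 = M2^{L-i}$ identical jobs, each having $|B_i|$ processing options, the $j$‑th loading the $j$‑th machine of $B_i$ by $1$ and no other machine. Once the algorithm has placed these jobs, the adversary lets $B_{i+1}$ be whichever half of $B_i$ (of size $|B_i|/2$) received the larger total load during round $i$, with ties broken arbitrarily; this depends only on the algorithm's past choices, so the sequence is a legitimate online instance. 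Finally, round $L+1$ releases $M$ more unit jobs that can only be placed on the singleton $B_{L+1}$.

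Two things then need to be checked. First, $\OPT = Mm^{1/p}$: the total number of unit jobs is $\sum_{i\le L} M2^{L-i} + M = M2^L = Mm$, each contributing load $1$ to exactly one machine, so every feasible solution has total load exactly $Mm$, and by convexity of $t\mapsto t^p$ its $\ell_p$‑norm is at least $Mm^{1/p}$; this is attained by assigning round $i$'s jobs $M$ per machine onto the $|B_i|/2$ machines of $B_i\setminus B_{i+1}$ (legal, since all later jobs stay inside $B_{i+1}$) and the final $M$ jobs onto $B_{L+1}$, which leaves every machine with load exactly $M$. Second, and this is the heart of the argument, I would lower‑bound the algorithm using the observation that the total load it deposits into $B_i$ during round $i$ equals $M|B_i|/2$ no matter what it does (all those jobs are forced into $B_i$), and this holds for every realization of a randomized algorithm's coins. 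Writing $T_i$ for the load on $B_i$ just before round $i$ (so $T_1=0$) and $s_i := T_i/|B_i|$, the fact that $B_{i+1}$ is a most‑loaded half of $B_i$ gives $T_{i+1}\ge \tfrac12\big(T_i+M|B_i|/2\big)$, which since $|B_{i+1}|=|B_i|/2$ simplifies to $s_{i+1}\ge s_i+M/2$, hence $s_{L+1}\ge LM/2$. As $B_{L+1}$ is a single machine, its load is at least $LM/2\ge pM/2$ in every outcome, so the algorithm's expected $\ell_p$‑load is at least $pM/2$; dividing by $\OPT = M\,2^{L/p}$ and using $L=\lceil p\rceil\le p+1$ gives
\[
\frac{\E[\text{algorithm's load}]}{\OPT}\;\ge\;\frac{pM/2}{M\,2^{L/p}}\;\ge\;\frac{p}{2\cdot 2^{1+1/p}}\;=\;\frac{p}{2^{2+1/p}},
\]
as claimed.

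Most of this is bookkeeping; the two points that need care are the calibration and the randomization. The number of jobs released each round is chosen precisely so that the per‑machine load increment telescopes to the constant $M/2$, which is what makes the bound grow linearly in $L\approx p$ rather than saturating; and the extension from deterministic to randomized algorithms — the only genuinely delicate point — works because the quantity driving the recursion, the total load forced into $B_i$ in round $i$, is algorithm‑independent, so $T_{i+1}\ge\tfrac12(T_i+M|B_i|/2)$ holds for the realized loads pointwise against the adaptive adversary and we never need to push an expectation through the adversary's maximum. I expect no real obstacle beyond tracking the constants, in particular the choice $L=\lceil p\rceil$ and the extra final $M$ jobs, which are what make $\OPT$ exactly $Mm^{1/p}$ rather than $M(m-1)^{1/p}$.
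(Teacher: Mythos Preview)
Your construction and the deterministic analysis are correct, and the argument is a clean variant of the paper's. The two proofs are in the same halving family but differ in two places. First, the paper's round-$i$ jobs each have exactly \emph{two} processing options (the active machines are paired up and $M$ jobs are issued per pair), whereas your round-$i$ jobs may go to any of the $|B_i|$ machines; both lead to the same $\OPT = Mm^{1/p}$. Second, and more substantively, the paper lower-bounds the algorithm by applying Jensen ($\E\|X\|_p \ge \|\E X\|_p$) and summing the $p$-th powers of expected loads across \emph{all} machines, level by level; you instead track only the single surviving machine $B_{L+1}$ and lower-bound $\|\cdot\|_p$ by one coordinate. Your route is arguably more elementary and avoids Jensen altogether, at the cost of an extra round $L{+}1$ to make the job count exactly $Mm$.

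There is, however, a genuine gap in your treatment of randomized algorithms. Your claim that the recursion ``holds for the realized loads pointwise against the adaptive adversary'' establishes a lower bound in the \emph{adaptive} adversary model: for each realization $\omega$ of the algorithm's coins you build an instance $I(\omega)$ on which that realization has load at least $LM/2$. But the theorem asks for a \emph{fixed} instance $I$ with $\E_\omega[\text{alg}_\omega(I)]$ large, i.e.\ an oblivious-adversary bound, and the pointwise-adaptive statement does not give this directly. The fix is easy and is exactly what the paper does: have the adversary choose $B_{i+1}$ to be the half of $B_i$ with larger \emph{expected} cumulative load (expectation over the algorithm's coins). The instance is then deterministic once the randomized algorithm is fixed, and the same recursion $\E T_{i+1} \ge \tfrac12(\E T_i + M|B_i|/2)$ holds for expected loads, yielding $\E[\text{load on } B_{L+1}] \ge LM/2$ and hence $\E\|S\|_p \ge LM/2$. (A minor wording issue: you want $B_{i+1}$ to be the half with larger \emph{cumulative} load after round $i$, not the load accrued ``during round $i$'' alone; otherwise the $T_i$ term does not carry through the recursion.)
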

		For the makespan case of $\ell_\infty$, one can apply the greedy algorithm with the $\ell_{\log m}$-norm to obtain an optimal $O(\log m)$-approximation (see also \cite{aspnes}); this uses the fact that $\ell_p$ for large $p$ approximates $\ell_\infty$, see Section \ref{sec:road}.
Special cases with improved guarantees~\cite{uniformMachines,allNorm,charikarLoadBal,CaragiannisRestricted,preemptiveTwoUniform}, as well as more general version of \GLB~\cite{vectorSched,convexOnlineAzar,convexOnlineAnupam}, have also been studied.  
		
		However, despite all these results, the \GLB problem has been mostly overlooked in \textbf{non-worse-case} models. Such models have received considerable attention recently, since avoiding worst-case instances often allows one to give algorithms with stronger guarantees that can be more representative of the behavior found in practice.
	A popular non-worst-case model is the \textbf{random-order} (i.e. secretary) model, where in this context the set of jobs is arbitrary but they come one-by-one in uniformly random order (see \cite{meyerson,BabaioffSurvey,DevanurHayes09,kesselheim} for a few examples).
	
	
	Even better are algorithms that have \textbf{simultaneously} a worst-case guarantee and an improved random-order guarantee. There only seems to be a few examples of such strong guarantees for different problems in the literature, most of them obtained quite recently~\cite{meyerson,adSim,welfareSim}.
		
		Our main contribution is to provide algorithms for the \GLB problem that attain simultaneously optimal worst-case competitive ratio as well as stronger guarantees in the random-order model (see Table~\ref{tab:results}). In fact, we provide algorithm \ultimate that has \textbf{optimal guarantees} (within constants) for both worst-case and random-order models. These are also the first random-order guarantees for this general problem (such results were not known even for the \emph{non-generalized} load balancing problem where the matrices $A^t$'s are diagonal).
		


		\subsection{Our results}
		
\renewcommand{\arraystretch}{1.15}
\setlength{\tabcolsep}{.3em}

\begin{table}[b]
	\footnotesize
	\center
	\begin{tabular}{c|l|l|l|}
		\cline{2-4}
		 & \multicolumn{1}{|c|}{Algorithm} & \multicolumn{1}{|c|}{Worst-case} & \multicolumn{1}{|c|}{Random-order}\\
		 \hline
		 \multicolumn{1}{|c|}{\multirow{2}{*}{$p \in [2, \infty)$}} & \greedywr & $O(p)$ ~{\scriptsize\cite{awerbuch}}& $\pmb{(1+\e)\OPT + O(\frac{p m^{1-1/p}}{\e})}$\\
		 \multicolumn{1}{|c|}{} & \textbf{\ultimate} & $\pmb{O(p)}$ & $\pmb{(1+\e) \OPT + O(\frac{p(m^{1/p} - 1)}{\e})}$\\
		 \hline
		 \multicolumn{1}{|c|}{\multirow{3}{*}{$p = \infty$}} & \greedywr, using $p\approx \frac{\log m}{\e}$ & $O(\frac{\log m}{\e})$ ~{\scriptsize\cite{awerbuch}} & $\pmb{(1+\e) \OPT + O(\frac{m \log m}{\e^2})}$\\
		 \multicolumn{1}{|c|}{} & \textsc{ExpertLB}~{\scriptsize\cite{guptaMolinaro}} & -- & $(1+\eps)\OPT + O(\frac{\log m}{\e})$ \hfill{\scriptsize\cite{guptaMolinaro}} \\ 
		 \multicolumn{1}{|c|}{} & \textbf{\ultimate, using} $\pmb{p\approx \frac{\log m}{\e}}$ & $\pmb{O(\frac{\log m}{\e})}$ & $\pmb{(1+\e) \OPT + O(\frac{\log m}{\e})}$\\
		 \hline 
	\end{tabular}
	
	\caption{Worst-case competitive ratio and random-order guarantees for \lpGLoadBal. New results are shown in bold.}
	\label{tab:results}
\end{table}
		
		
			
			
		\paragraph{Simultaneous guarantee for the greedy algorithm.} 
		Our first result shows that a small modification of the greedy algorithm, namely restarting it at time $n/2$, maintains optimal $O(p)$-approximation in the worst-case model while having improved approximation guarantee for the random-order model.
		
		\begin{theorem} \label{thm:greedy}
			For all $p \in [2, \infty)$, the greedy algorithm with restart \greedywr has the following guarantees:
			\vspace{-5pt}
			\begin{enumerate}
				\item[(a)] In the worst-case model is $O(p)$-competitive 
				\vspace{-10pt}
				\item[(b)] In the random-order model has expected load at most $(1+4\e) \OPT + \frac{(3p + 1)\, m^{1-1/p}}{\e}$ for all $\e \in (0,1]$.\footnote{Recall that we have assume all loads to be in the interval $[0,1]$; in general, the additive term in this expression scales with max load.} 
			\end{enumerate}
			Moreover, for $p= \infty$, \greedywr with $p=\Theta(\frac{\log m}{\e})$ has worst-case competitive ratio $O(\frac{\log m}{\e})$ and random-order guarantee $(1+\e) \OPT + O(\frac{m \log m}{\e^2})$.
		\end{theorem}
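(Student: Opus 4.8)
Part (a) follows from the known worst-case bound. Algorithm \greedywr runs \greedy from scratch on the sub-instance $\{1,\dots,n/2\}$ and then again on $\{n/2+1,\dots,n\}$, so its final load is $\alg=\alg_1+\alg_2$, the coordinatewise sum of the two phase loads. Since loads are non-negative, deleting jobs only decreases the offline optimum, so each phase's sub-instance has optimum at most $\OPT$; then by \cite{awerbuch} we get $\|\alg_i\|_p\le O(p)\,\OPT$ for $i=1,2$, and Minkowski's inequality gives $\|\alg\|_p\le\|\alg_1\|_p+\|\alg_2\|_p\le O(p)\,\OPT$. For part (b), the same decomposition together with the symmetry of the random-order model --- swapping the first and second half-blocks of the arrival permutation is again a uniform permutation and it exchanges the two phases --- yields $\E\|\alg_1\|_p=\E\|\alg_2\|_p$, hence $\E\|\alg\|_p\le 2\,\E\|\alg_2\|_p$. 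So it suffices to prove that \greedy run on a uniformly random half $S$ of the jobs, in uniformly random internal order, has expected load at most $(\tfrac12+2\e)\OPT+O(\tfrac{p\,m^{1-1/p}}{\e})$, i.e.\ half the claimed bound.

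To control \greedy on $S$, I would work with a smoothed surrogate $\Psi$ of $\|\cdot\|_p$ (the smoothing of the $\ell_p$-norm from the paper) having the properties: (i) $\|u\|_p\le\Psi(u)\le(1+\e)\|u\|_p+O(\e\,m^{1/p})$; and (ii) $\Psi$ is convex with a dual gradient $g(u)$ satisfying $\|g(u)\|_q\le 1+\e$ (where $\tfrac1p+\tfrac1q=1$) that is robust to perturbing $u$ by a vector in $[0,1]^m$ --- the true gradient of $\|\cdot\|_p$ lacks this near the origin, which is what forces the smoothing. When \greedy processes the $s$-th job $\tau_s$ of $S$ it picks the column of $A^{\tau_s}$ minimizing the resulting $\ell_p$-load, hence does essentially no worse than the column $v^{*\tau_s}$ chosen by the offline optimum; combining this with (i)--(ii) and convexity, the increase of $\Psi$ at step $s$ is at most $\ip{g(L^{(s-1)})}{v^{*\tau_s}}$ plus a small smoothing slack, where $L^{(s-1)}$ is \greedy's load before step $s$. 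Telescoping over $s$ gives $\Psi(\alg_2)\le\sum_s\ip{g(L^{(s-1)})}{v^{*\tau_s}}+(\text{slack})$.

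Now introduce the random order. Condition on the history $\mathcal F_{s-1}$ (the set and order of the first $n/2+s-1$ arrivals, which determines $L^{(s-1)}$ and $g(L^{(s-1)})$); then the next job is uniform over the $r_s:=n/2-s+1$ not-yet-seen jobs, so $\E[v^{*\tau_s}\mid\mathcal F_{s-1}]=\tfrac1{r_s}(L^*-Q_{s-1})$, where $L^*=\sum_t v^{*t}$ is the offline optimal load vector and $Q_{s-1}$ its restriction to the jobs seen so far. The factor $\tfrac1{r_s}$ is the crux: over the random choice of which jobs have been seen, $\E[L^*-Q_{s-1}]=\tfrac{r_s}{n}L^*$, so the $\tfrac1{r_s}$ cancels, and \emph{if} $g(L^{(s-1)})$ were independent of the set of seen jobs we would conclude $\E\Psi(\alg_2)\le\sum_s\tfrac1n\ip{g(L^{(s-1)})}{L^*}+(\text{slack})\le\tfrac1n\cdot\tfrac n2\cdot(1+\e)\OPT+(\text{slack})=\tfrac{1+\e}{2}\OPT+(\text{slack})$, by \holder and $\|g\|_q\le 1+\e$ --- with no spurious $\log n$ --- and doubling would finish.

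The main obstacle is precisely that $g(L^{(s-1)})$ is \emph{not} independent of the seen set, since it is computed from it; and the naive substitute of bounding $\ip{g(L^{(s-1)})}{L^*-Q_{s-1}}\le(1+\e)\|L^*-Q_{s-1}\|_p$ and then taking expectations is too lossy, because the $\ell_p$-norm of the optimal load carried by a random subset of jobs fluctuates substantially when that load sits on few jobs. This is the correlation the paper tames with a purely probabilistic inequality; I would obtain it by applying such an inequality (proven via the paper's improved OLO algorithm over the non-negative part of the $\ell_q$-ball) to the correlated sum $\sum_s\tfrac1{r_s}\ip{g(L^{(s-1)})}{L^*-Q_{s-1}}$, replacing the per-step \holder estimate by a global bound of the form $(1+\e)\cdot\tfrac12\OPT+\text{regret}$, where the $\tfrac12\OPT$ is the expected total of the increments and the regret is $O(\tfrac{p\,m^{1-1/p}}{\e})$ --- with $m^{1-1/p}$ entering through $\|L^*\|_1\le m^{1-1/p}\|L^*\|_p$ bounding the relevant width and $\tfrac1\e$ the step size. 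Folding the smoothing slack and regret into the additive term and doubling via the symmetry reduction proves part (b). The $p=\infty$ statement then follows by applying part (b) with $p=\Theta(\tfrac{\log m}{\e})$ and using $\|u\|_\infty\le\|u\|_p\le m^{1/p}\|u\|_\infty=(1+O(\e))\|u\|_\infty$: the $\ell_\infty$ and $\ell_p$ optima agree within $1+O(\e)$, \greedywr has $\ell_\infty$-load at most its $\ell_p$-load, and $m^{1-1/p}=m\cdot m^{-\e/\log m}=\Theta(m)$, so the guarantees become $O(\tfrac{\log m}{\e})$ and $(1+O(\e))\OPT+O(\tfrac{m\log m}{\e^2})$; rescaling $\e$ by a constant gives the stated form.
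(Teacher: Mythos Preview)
Your part (a) and the symmetry reduction for part (b) match the paper. The gap is in the core of part (b): you propose to analyze \greedywr via the smoothing $\psi_{\e,p}$, but \greedywr is greedy with respect to $\|\cdot\|_p$, not with respect to $\psi_{\e,p}$, and these are genuinely different objectives (minimizing $\|S^{t-1}+v\|_p$ over the processing options $v$ is \emph{not} the same as minimizing $\psi_{\e,p}(S^{t-1}+v)=\tfrac{p}{\e}\|\mathbf{1}+\tfrac{\e}{p}(S^{t-1}+v)\|_p-\tfrac{p}{\e}$, because of the shift by $\mathbf{1}$). So from ``\greedy does essentially no worse than $v^{*\tau_s}$'' in the $\ell_p$-norm you cannot conclude that the increase of $\Psi$ is at most $\ip{g(L^{(s-1)})}{v^{*\tau_s}}$; the greedy property simply does not transfer to the surrogate's gradient. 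This is precisely why the paper introduces a \emph{different} algorithm, \smoothgreed, when it wants to exploit $\psi_{\e,p}$.

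What the paper actually does for \greedywr is linearize $\|\cdot\|_p$ itself, using the optimal strong-smoothness modulus of $\|\cdot\|_p^2$ (Lemma~\ref{lemma:linLp}): for $u\neq 0$ there is $g(u)\in\ell_q^+$ with $\|u+v\|_p\le\|u\|_p+\ip{g(u)}{v}+\tfrac{(p-1)\|v\|_p^2}{2\|u\|_p}$. The greedy property now applies directly, and the analysis splits into the linear term (handled by the correlation Lemma~\ref{lemma:breakCorr}, as you intended) and a second-order error. The denominator $\|u\|_p$ blows up near the origin, so the paper introduces a stopping time $\btau$ (first time the load exceeds $\e\,\OPT$) and only linearizes afterwards; the second-order sum is then bounded via $\|\tilde{\bell}^t\|_p^2\le\|\tilde{\bell}^t\|_1$ and $\sum_t\|\tilde{\bell}^t\|_1=\|\sum_t\tilde{\bell}^t\|_1\le m^{1-1/p}\OPT$. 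This second-order term, not the OLO regret, is where the $m^{1-1/p}$ in the statement comes from --- the correlation lemma contributes only the smaller $p(m^{1/p}-1)/\e$. Your attribution of $m^{1-1/p}$ to a ``width'' in the OLO regret is therefore off as well.
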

			Note that the lower bound from Theorem \ref{thm:LBWC} shows that in the worst-case model no algorithm can have guarantee of the form $cst \cdot \OPT + \alpha$ with $\alpha$ depending only on $m$ and $p$, and hence the random-order guarantee of Theorem \ref{thm:greedy} does not hold in the worst-case model. Moreover, typically $\OPT$ grows with the number of jobs $n$; in this case, the guarantee becomes $(1+\e) \OPT + o(\OPT)$, asymptotically giving arbitrarily close approximations, a big improvement over the best possible $O(p)\,\OPT$ worst-case guarantee.
		
		A main ingredient for proving the random-order guarantee is the optimal modulus of strong smoothness of $\|.\|_p^2$ proved recently in the context of inequalities for the $\ell_p$-norm of random vectors~\cite{AMS}.	Also, as in \cite{guptaMolinaro}, restarting the algorithm \short{}{at time $\frac{n}{2}$} reduces the correlations that arise in the random-order model: at each step, the current state now depends on at most $\frac{n}{2}-1$ jobs, so the next \short{}{random} job \short{}{still} has ``enough randomness'' for the analysis to go through.

		\paragraph{Improved simultaneous guarantee and Online Linear Optimization.} While the above algorithm typically \emph{asymptotically} gives arbitrarily close approximations in the random-order model, notice the guarantee degrades as $p$ increases, as it happens in the worst-case model. The following simple extreme example illustrates this.
		
		\begin{example}
			Consider an instance for $p= \infty$ with $m$ machines and $m$ jobs, with 2 processing options each, where job $i$'s processing options have load vectors $(1-\e, 1-\e, \ldots, 1-\e)$, for $\e \in (0,1)$, and $e^i$ (the $i$th canonical vector). It is easy to see that regardless of the order of the jobs, \greedywr always chooses processing option $(1-\e, \ldots, 1-\e)$, incurring total load $m (1-\e)$. On the other hand, $\OPT = 1$. This gives a $\Omega(m)$ additive/multiplicative gap even in the random-order model. (This example still holds for finite $p \gg m, \frac{1}{\e}$.)
		\end{example}
		
		However, we provide a new algorithm, \ultimate, that has \textbf{simultaneously optimal guarantees} (within constants) in both worst-case and random order models. In particular, its random-order guarantee improves with $p$.
		
					
		\begin{theorem} \label{thm:ultimate}
			For all $p \in [2, \infty)$ algorithm \ultimate has the following guarantees:
			\vspace{-5pt}
			\begin{enumerate}
				\item[(a)] In the worst-case model is $O(p)$-competitive 
				\vspace{-10pt}
				\item[(b)] In the random-order model has expected load at most $(1 + 4\e) (\OPT + \frac{6 p (m^{1/p} - 1)}{\e})$ for any $\e \in (0, 1]$.
			\end{enumerate}
				Moreover, in the case $p=\infty$, \ultimate with $p = \Theta(\frac{\log m}{\e})$ has worst-case competitive ratio $O(\frac{\log m}{\e}) \OPT$ and random-order guarantee $(1+\e) \OPT + O(\frac{\log m}{\e})$.
		\end{theorem}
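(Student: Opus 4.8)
The plan is to build \ultimate around a single object — a smoothing $\Phi$ of the $\ell_p$-norm that is simultaneously (i) within an absolute constant of $\|\cdot\|_p$ on the nonnegative orthant, and (ii) strongly smooth with the (near-)optimal modulus, i.e.\ the quantitative form of the inequality of~\cite{AMS}. Property (i) is what lets \ultimate behave like a greedy algorithm in the worst case; property (ii) is what lets it behave like a primal-dual algorithm in random order; and the strong-smoothness modulus is precisely what will produce the additive $m^{1/p}-1$ scaling. Operationally, \ultimate maintains a dual vector $\theta^t$ in the nonnegative part of the $\ell_q$-ball ($q = \frac{p}{p-1}$) — morally the gradient of $\Phi$ at the running load, updated by the new OLO routine — and at step $t$ picks the processing option minimizing $\langle \theta^t, A^t_j\rangle$ (equivalently, the option that least increases $\Phi$, up to the smoothness error). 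As with \greedywr, and as in~\cite{guptaMolinaro}, the algorithm restarts at time $n/2$, so that in the random-order model the state $\theta^t$ depends on few enough jobs for the next job to carry enough fresh randomness.

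For part (a) I would run the textbook potential argument: since \ultimate decreases $\Phi$-increments greedily and $\Phi$ sandwiches $\|\cdot\|_p$ up to a constant by~(i), the $O(p)$-competitive greedy analysis of Awerbuch et al.\ (the upper bound matching Theorem~\ref{thm:LBWC}) carries over with only a constant loss; the restart costs another constant, since the loads of the two greedy phases can be bounded separately and combined by the triangle inequality. For the $p=\infty$ statement, instantiate with $p = \Theta(\frac{\log m}{\e})$ and use $\|u\|_\infty \le \|u\|_p \le m^{1/p}\|u\|_\infty = e^{\Theta(\e)}\|u\|_\infty \le (1+O(\e))\|u\|_\infty$: an $O(p)$-approximation in $\ell_p$ becomes an $O(\frac{\log m}{\e})$-approximation in $\ell_\infty$, while the additive term $O(\frac{p(m^{1/p}-1)}{\e})$ becomes $O(\frac{\log m}{\e})$ since $m^{1/p}-1 = e^{(\log m)/p}-1 = \Theta(\frac{\log m}{p})$.

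For part (b) the heart is the primal-dual analysis of the random-order case. Using $\|v\|_p = \max\{\langle\theta,v\rangle : \theta \ge 0,\ \|\theta\|_q \le 1\}$ for $v \ge 0$, the offline optimum is the value of a min-max problem whose inner domain is exactly the action set of the new OLO routine. Feeding that routine the load vectors $g^t := A^t \bar x^t$ of the options \ultimate actually selects, its regret guarantee yields $\|\sum_t g^t\|_p \le \sum_t \langle \theta^t, g^t\rangle + R$, with $R = O(\frac{p(m^{1/p}-1)}{\e})$ after tuning the learning rate. By \ultimate's (near-)greedy choice, $\langle\theta^t, g^t\rangle \le \langle\theta^t, A^t x^{t,\star}\rangle$ for OPT's assignment $x^{t,\star}$ of job $t$, so it remains to bound $\E[\sum_t \langle \theta^t, A^t x^{t,\star}\rangle]$. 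This is where the random order and the \emph{second} use of the OLO routine enter: I would establish directly — not step by step, which leaks an $n$-dependent term — the probabilistic inequality $\E[\sum_t \langle\theta^t, A^t x^{t,\star}\rangle] \le (1+O(\e))\OPT + O(\frac{p(m^{1/p}-1)}{\e})$, by an online-to-batch argument that exploits both the low regret of the OLO algorithm generating the $\theta^t$'s and the exchangeability of the phase-2 jobs (which the restart guarantees, conditionally on phase 1). Absorbing lower-order terms into the multiplicative $(1+4\e)$ and handling phase 1 symmetrically (it is also in random order) then finishes the bound; the $p=\infty$ case is as in part (a).

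The step I expect to be the genuine obstacle is exactly this last probabilistic inequality. A per-step conditioning argument — bound \ultimate's cost at step $t$ by the $\theta^t$-average of OPT's residual loads and then sum — looks like the natural route, but it leaks a spurious $\Theta(\log n)$ (or $\Theta(\sqrt n)$) factor, because the dual iterate $\theta^t$ is correlated with the remaining random jobs through sampling without replacement and because the residual optimum shrinks non-linearly along the process. Getting an $n$-free additive term seems to require treating $\E[\sum_t \langle\theta^t, A^t x^{t,\star}\rangle]$ as a single quantity and bounding it by running the new OLO algorithm inside an online-to-batch reduction over the exchangeable phase-2 jobs — precisely the conceptual point flagged in the introduction. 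A secondary, quantitative difficulty is threading the exact modulus of strong smoothness of $\Phi$ (the sharp form of~\cite{AMS}) through both the OLO regret bound and the primal-dual accounting, so that the additive constant emerges as the stated $6p(m^{1/p}-1)/\e$ rather than merely $O(\cdot)$ of it.
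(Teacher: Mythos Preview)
Your plan essentially describes \smoothgreed (Section~\ref{sec:improved}), and the gap is in part~(a). Property~(i) as you state it --- that the smoothing sandwiches $\|\cdot\|_p$ within an \emph{absolute} constant --- does not hold for $\psi_{\e,p}$ and cannot hold for any smoothing that also delivers the OLO regret you need in part~(b): by Lemma~\ref{lemma:smoothened} the additive gap is $R=\frac{p(m^{1/p}-1)}{\e}$, which depends on $m,p,\e$ (indeed $\psi_{\e,p}(0)=R$, and this value \emph{is} the additive regret in Theorem~\ref{thm:OLO}). Running the Awerbuch et al.\ potential argument through $\psi_{\e,p}$ therefore only yields load $O(p)\,\OPT + O(R)$ --- this is precisely Theorem~\ref{thm:smoothGreedy}(a) --- and when $\OPT \ll R$ this is not $O(p)$-competitive. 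So the ``greedy view'' of the smoothed algorithm is not strong enough by itself.

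The idea you are missing is the two-phase structure of \ultimate (Section~\ref{sec:ultimate}): first run the \emph{unsmoothed} \greedy until the accumulated load first exceeds $R$, and only then switch to \smoothgreed on the remaining jobs. If the switch never triggers, \greedy's $O(p)$ guarantee already suffices. If it does trigger, then \greedy has incurred load $>R$ on a prefix, so by its $O(p)$-competitiveness $\OPT \ge \Omega\big(\frac{m^{1/p}-1}{\e}\big)$, and the additive $O(R)$ term coming from \smoothgreed on the suffix is absorbed into $O(p)\,\OPT$. For the random-order bound one conditions on the (data-dependent) stopping time and applies Theorem~\ref{thm:smoothGreedy}(b) to the random-order suffix. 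Your sketch of part~(b) is close to that \smoothgreed analysis, but one correction: the paper \emph{does} use a per-step conditioning bound (Lemma~\ref{lemma:breakCorr}) rather than a global online-to-batch argument --- the restart at the midpoint is exactly what keeps $\sum_{t\le n/2}\frac{1}{n-(t-1)}\le 1$ and prevents the $\Theta(\log n)$ leak you anticipated, so no separate device is needed there.
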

				
		The function $p (m^{1/p} - 1)$ is decreasing in $p$, hence the random-order bound of algorithm \ultimate is always better (within constants) than that of \greedywr. Moreover, this function converges to $\ln m$ as $p$ goes to infinity~\cite{wikiLog}. Thus, this guarantee matches the only known result for \GLB in the random-order model, the $(1+\e) \OPT + O(\frac{\log m}{\e})$ bound given for the special case $p = \infty$ in~\cite{guptaMolinaro}. 
		Moreover, setting in hindsight the approximately optimal value $\e = \sqrt{\frac{p (m^{1/p} - 1)}{\OPT}}$ shows that \ultimate's solution has load at most $\OPT + O(\sqrt{\OPT \cdot p m^{1/p}})$. The following result, whose proof is presented in Appendix \ref{app:lb}, shows that this guarantee cannot be significantly improved. 
		
	\begin{theorem} \label{thm:lb}
		For every even $p \ge 2$, there is an instance of \GLB with $m = 2^p$ and $\OPT = \frac{p m^{1/p}}{2}$ such that 
		any algorithm incurs expected total load at least $\OPT + cst \sqrt{\OPT \cdot p m^{1/p}}$ for constant $cst = 1/(100 \sqrt{2})$.
	\end{theorem}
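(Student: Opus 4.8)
\noindent The plan is to exhibit one hard instance on $m=2^p$ machines — with its machine labels and arrival order randomized, so that Yao's principle reduces the task to deterministic algorithms — for which the offline optimum is essentially the perfectly balanced load $\tfrac p2$ on every machine (so $\OPT=\tfrac p2 m^{1/p}=p$, using $m^{1/p}=2$), while any online algorithm facing the jobs in random order is forced to produce a load vector whose $\ell_p$-norm exceeds $p$ by a fixed constant factor. (A worst-case statement of this flavor is already subsumed by Theorem~\ref{thm:LBWC}, so the content is the random-order bound, matching the random-order guarantee of \ultimate.) The instance has two ingredients: a \emph{bulk} of rigid single-option jobs $e_j$ — about $\tfrac p2$ copies per machine $j$ — whose only role is to pin the optimum at $\tfrac p2 m^{1/p}$; and a hierarchical \emph{splitting gadget}, a laminar chain of machine sets $C_1\supsetneq C_2\supsetneq\cdots$ (e.g.\ the level-$i$ subcubes of the first-$i$-coordinates partition of $\{0,1\}^p$), where for each level there is a job that must commit a unit of load to one of the two equal halves of the current active set, the half that ``continues the chain'' being hidden from the algorithm. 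A natural version (already enough to make the optimum transparent) gives, for each $i$, the job with load options $\mathbf 1_{C_i^0}$ and $\mathbf 1_{C_i^1}$ (entries in $\{0,1\}$, hence admissible); the precise gadget — possibly symmetrized into several independent copies on disjoint machine blocks — is tuned so that the excess is forced against \emph{every} algorithm, not merely a particular one.

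\noindent \emph{Step 1 (upper bound on $\OPT$ — the easy step).} Exhibit an explicit offline assignment: keep the bulk's load $\approx\tfrac p2$ on every machine, and route the $i$-th gadget job onto the half $C_i\setminus C_{i+1}$ that no later gadget set meets. Since the ``discarded'' halves $C_1\setminus C_2,\ C_2\setminus C_3,\dots$ are pairwise disjoint, every machine is touched by at most one gadget job, so the offline load vector has all coordinates within $O(1)$ of $\tfrac p2$ and $\OPT\le(\tfrac p2+o(1))\,m^{1/p}$; a one-line adjustment of the bulk multiplicities (making the discarded-half machines start $1$ lower) makes this exactly $\tfrac p2 m^{1/p}$, as required.

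\noindent \emph{Step 2 (lower bound on the algorithm's cost).} After passing to a deterministic algorithm via Yao, argue that when the $i$-th gadget job must be answered, the algorithm has not yet received the information pinning down the ``correct'' half (the random relabelling hides it, and in random arrival order it need not be revealed in time), so in expectation it dumps $\Omega(1)$ load onto the ``wrong'' side; accumulating over the $\Theta(p)$ levels (and the copies) this forces, in expectation, an excess of $\Omega(p)$ load concentrated on few machines, equivalently $\Omega(\sqrt p)$ on a constant fraction of them. Then (\emph{Step 3}) convert this to a bound on the \emph{expected} $\ell_p$-norm: combining the $m$ bulk coordinates near $(\tfrac p2)^p$ with the forced excess, and using $m=2^p$ together with elementary estimates for how $(\tfrac p2+\delta)^p$ scales, gives $\|\text{load}\|_p\ge p\,(1+\Omega(1))=\OPT+\Omega\!\big(\sqrt{\OPT\cdot p m^{1/p}}\big)$, the last equality because $\sqrt{\OPT\cdot p m^{1/p}}=\sqrt{p\cdot 2p}=\sqrt2\,p$ in this regime; carefully tracking every constant through the approximations and through the randomization loss produces the stated $cst=1/(100\sqrt2)$.

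\noindent The step I expect to be the main obstacle is reconciling Steps 2 and 3. The natural output of Step 2 is a bound ``in expectation'' — at best on $\E[\|\text{load}\|_p^p]$ — and since $x\mapsto x^{1/p}$ is concave this does \emph{not} give a lower bound on $\E[\|\text{load}\|_p]$; moreover, merely pointing to a single overloaded coordinate throws away the bulk contribution of all $m$ machines, which is exactly what makes the $\ell_p$-norm large. Overcoming this requires the forced excess to hold with probability $1-o(1)$ rather than only on average — either by engineering the gadget so that against a deterministic algorithm the hidden half \emph{provably} (not just on average) receives the load at each level, or by exploiting independence across the symmetrized copies plus a concentration inequality so that the many coordinatewise excesses occur simultaneously with high probability — and one must verify that the randomization introduced for Yao (random relabelling, random order) does not itself dilute the gadget's effect below the threshold needed for $cst=1/(100\sqrt2)$.
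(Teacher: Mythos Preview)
Your high-level plan---exhibit a random instance where $\OPT$ balances to $\tfrac p2$ on every machine while any algorithm is forced into overlapping mistakes---is right, but the concrete gadget you sketch (bulk jobs $e_j$ plus a laminar chain $C_1\supsetneq C_2\supsetneq\cdots$) is not what the paper uses, and your version has real gaps.

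The paper's construction has no bulk jobs. It uses the \emph{Walsh system}: $v^i\in\{0,1\}^m$ is the indicator of machines whose $i$-th address bit equals $1$. For each $i\le p/2$ there are \emph{two} jobs: a single-option job with load vector $\bu^i$ (equal to $v^i$ or $(v^i)^c$, each with probability $1/2$), and a two-option job with options $\{v^i,(v^i)^c\}$. Offline one matches the two-option job to $(\bu^i)^c$, so each pair contributes exactly $\mathbf 1$ and $\OPT=\tfrac p2\,m^{1/p}$ exactly (an $\ell_1$ count shows this is also a lower bound). The hiding mechanism is not random relabelling but the pair structure plus random order: with probability $1/2$ the two-option job arrives first, and then $\bu^i$ is still an unbiased coin independent of whatever the algorithm chose, so each pair is a ``mistake'' ($\bX_i=1$) with probability $\ge 1/4$. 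The key structural fact---absent in a nested chain---is an intersection lemma: for \emph{any} $k$ indices and any choice of $v^i$ versus $(v^i)^c$, the chosen vectors are simultaneously $1$ on exactly $m/2^k$ machines. Hence $k$ mistakes always pile height $\tfrac p2+k$ on $m/2^k$ machines, regardless of \emph{which} $k$ levels erred. Your laminar chain lacks this: if the mistakes fall at scattered levels $i_1<\cdots<i_k$, the common overlap is only $C_{i_k+1}$, of size $m/2^{i_k}$ rather than $m/2^k$, and the bound degrades. Moreover, your ``one-line adjustment'' of bulk multiplicities to make $\OPT$ exact would itself reveal the chain to the algorithm.

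Your Step~3 worry is over-engineered. You do not need probability $1-o(1)$. Because every assignment has $\ell_1$-load exactly $\tfrac{mp}{2}$, any algorithm \emph{deterministically} satisfies $\|\text{load}\|_p\ge m^{1/p-1}\|\text{load}\|_1=\OPT$. So it suffices that $\|\text{load}\|_p\ge(1+c)\,\OPT$ hold with some \emph{constant} probability: Markov applied to $\tfrac p2-\sum_i\bX_i$ gives $\sum_i\bX_i\ge p/16$ with probability $\ge 1/7$, and on that event the intersection lemma yields load $\ge 1.07\,\OPT$; averaging against the deterministic bound $\OPT$ on the complement gives $\E\|\text{load}\|_p\ge 1.01\,\OPT=\OPT+\tfrac{1}{100\sqrt 2}\sqrt{\OPT\cdot p\,m^{1/p}}$.
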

		
			The main idea behind algorithm \ultimate, or more precisely its precursor \smoothgreed, is that we can see it simultaneously as approximately \textbf{greedy} and as an approximately \textbf{primal-dual} algorithm; from the ``greedy'' part we get the worst-case guarantee, and from the ``primal-dual'' part the random-order guarantee. 
			
			Moreover, in the approximately primal-dual view, the dual variables are set according to a new algorithm for \textbf{Online Linear Optimization} (OLO) over the non-negative vectors in the $\ell_q$-ball (the dual of the $\ell_p$-ball). In this game, in each round the player needs to choose a non-negative vector vector $v^t \in \R^m$ with $\ell_q$-norm at most 1, and then the adversary chooses a non-negative vector $w^t \in [0,1]^m$, giving reward $\ip{v^t}{w^t}$ to the algorithm. The goal of the algorithm is to maximize the sum of the rewards obtained. As usual, the reward of the algorithm is measured against the optimal fixed solution $v^*$ in hindsight. This is a generalization of classical Prediction with Experts Problem~\cite{cesaBianchi}, which corresponds to the case $q = 1$.
	
	A general connection between guarantees in the random-order (or the weaker i.i.d) model and OLO games has been recently shown in~\cite{guptaMolinaro,agrawalDevanur}. However, a crucial point is that since we \textbf{simultaneously} want worst-case guarantee as well, it is not clear that we we can employ an OLO algorithm in a \textbf{black-box fashion}. Interestingly, our new OLO algorithm has better regret than what is available in the literature, which is needed for the optimal random-order guarantee of \ultimate. We are interested in OLO algorithms with multiplicative/additive regret of the form $\algo \ge (1-\e) \OPTfixed - R$, where $\OPTfixed$ denotes the reward of the best fixed solution in hindsight. To the best of our knowledge, the best such bound for this OLO game is $\algo \ge (1-\e) \OPTfixed - O(\frac{m^{1/p} \log m}{\e})$, obtained in the seminal paper of Kalai and Vempala~\cite{kalai}. Our OLO algorithm has regret $\algo \ge(1-\e) \OPTfixed - O(\frac{p (m^{1/p} - 1)}{\e})$, see Theorem \ref{thm:OLO}; this gives a $\log m$ factor reduction in the additive term for small $p$, and dominates the Kalai-Vempala bound for all $p$.

		Another interesting connection is that we use our OLO algorithm to prove a purely probabilistic inequality (Lemma \ref{lemma:breakCorr}) that controls the correlations arising in the random-order model, a common source of difficulty for the analysis in this model. In~\cite{guptaMolinaro}, such control was obtained via a maximal inequality and union bound for the special case of the $\ell_\infty$-norm. However, for general $\ell_p$-norms a straightforward union bound gives a weaker bound than the OLO-based approach, leading to suboptimal guarantees.
		\mmnote{Put in the section: see also~\cite{rakhlin} for a relationship between OLO algorithms and concentration of measure}

	
	An important component used in both \ultimate and our OLO algorithm is a \textbf{smoothened} version $\psi_{p, \e}$ of the $\ell_p$-norm; in particular, this is what allows us to see \ultimate simultaneously as both an approximately greedy algorithm and an approximately primal-dual algorithm, as mentioned before.	 This smoothened function can be seen as a generalization of the exp-sum function $\textrm{ExpSum}(u) = \frac{1}{\e} \ln \sum_i e^{\e u_i}$, a much used smoothing of the $\ell_\infty$-norm. Given the host of applications of exp-sum, we hope that the smoothings $\psi_{p,\e}$ will find use in other contexts.

	\subsection{Roadmap and notation} \label{sec:road}
	
	In Section \ref{sec:OLO} we present and analyze our OLO algorithm \smoothbaseline, and also define the smoothing $\psi_{\e,p}$ used throughout. In the following section, we use this OLO result to prove the correlation inequality that is needed for the random-order analyses of all the algorithms considered. In Section \ref{sec:greedy} we analyze the greedy algorithm with restart \greedywr. In Section \ref{sec:improved} we present algorithm \smoothgreed, which has improved random-order guarantee but has a spurious term in its worst-case guarantee. Finally, in Section \ref{sec:ultimate} we combine this algorithm with the greedy one to remove this spurious term, obtaining algorithm \ultimate.
	
	We now define some notation.  We use $\ell_q^+$ to denote the set of non-negative vectors $\R^m_+$ with $\ell_q$ norm at most 1. Given $p \in (1, \infty)$, its \emph{\holder conjugate} $q$ is the number that satisfies $\frac{1}{p} + \frac{1}{q} = 1$. It follows from norm duality that if $p$ and $q$ are \holder conjugate, then for every vector $x \ge 0$
	\begin{align}
		\forall y \in \ell^+_q~~~~ \ip{x}{y} \le \|x\|_p, ~~~~~~\textrm{ and }~~~~~~~~~~ \|x\|_p = \max_{y \in \ell^+_q} \ip{x}{y}. \label{eq:dualNorm}
	\end{align}
 Also, we will use the well-known comparison between norms: if $p \ge p'$, then for every vector $x \in \R^m$ we have $\|x\|_p \le \|x\|_{p'} \le m^{\frac{1}{p'} - \frac{1}{p}} \|x\|_{p'}$. Finally, we use bold letters for random variables.


\section{The $\ell_q^+$ OLO problem and the Smoothened Baseline Gradient algorithm} \label{sec:OLO}

	\paragraph{$\ell_q^+$ OLO problem.}	Recall that the $\ell_q^+$ OLO problem proceeds in $n$ rounds. In round $t$, first the algorithm chooses a vector $v^t \in \ell_q^+$ based on the adversary's previous vectors $w^1, \ldots, w^{t-1}$. Then the adversary chooses a vector $w^t \in [0,1]^m$, and the algorithm obtains reward $\ip{w^t}{v^t}$. The goal of the algorithm is to maximize the sum of the rewards $\sum_{t = 1}^n \ip{w^t}{v^t}$. The \emph{regret} of the algorithm is obtained by comparing against the best fixed decision $v \in \ell_q^+$ in hindsight. We say that an algorithm has \emph{$(\e, R)$-regret} if
	\begin{align}
		\sum_{t = 1}^n \ip{w^t}{v^t} \ge e^{-\e} \left( \max_{v \in \ell_q^+} \sum_{t = 1}^n \ip{w^t}{v} - R \right). \label{eq:regret}
	\end{align}
	Recall that $e^{\pm \e}$ is approximately $(1 \pm \e)$ for small values of $\e$\short{.}{, so this is a reparametrization of the standard multiplicative/additive regret used, for instance, in~\cite{kalai}.}

	\paragraph{Smoothened Baseline Gradient Algorithm.}	To obtain an intuition about algorithms for this problem, we can see the right-hand side of the regret expression in a different way. Let $p$ be the \holder conjugate of $q$. Then duality of norms (equation \eqref{eq:dualNorm}) gives that $\max_{v \in \ell_q^+} \sum_t \ip{w^t}{v} = \|\sum_t w^t\|_p$, hence the regret expression becomes $\sum_t \ip{w^t}{v^t} \ge e^{-\e}( \|\sum_t w^t\|_p - R)$. Thus, we can interpret the algorithm's decision $v^t$ as trying to locally approximate the \emph{baseline potential} $\|.\|_p$ at $\sum_{t' < t} w^{t'}$ to capture the increase in norm caused by the unknown $w^t$. 
	
	Thus, a natural strategy is to choose $v^t$ as a (sub-)gradient $\nabla \|\sum_{t'< t} w^{t'}\|_p$ belonging to $\ell_q^+$. However, one can show that this strategy has too high regret. 
	The issue is that the gradient can quickly vary from one point to another, so approximating the value $\|u+v\|_p = \|u\|_p + \int_0^1 \nabla\|u + x v\|_p~dx$ by the first order expression $\|u\|_p + \ip{\nabla\|u\|_p}{v}$ is not good enough.	To avoid this problem, we will replace the norm $\|.\|_p$ by a smoother function $\psi$ satisfying the following:
	\begin{align}
		&\textrm{(a) (additive error) For all $u \in \R^m_+$, $\|u\|_p \le \psi(u) \le \|u\|_p + R$}& \label{eq:propSmooth1}\\
		&\textrm{(b) (stability) For all $u \in \R^m_+$ and $v \in [0,1]^m$, $\nabla \psi(u + v) \stackrel{\textrm{pointwise}}{=} e^{\pm \e} \cdot \nabla \psi(u)$} .\label{eq:propSmooth2}
	\end{align}

	To obtain such smoothing, we notice that $\|.\|_p$ is a \emph{generalized $f$-mean}, namely $\|w\|_p = f^{-1}(\sum_i f(w_i))$ for the function $f(x) = x^p$. We then define the smoothened function 
	\begin{align}
		\psi_{\e,p}(u) = f_{\e,p}^{-1}\bigg(\sum_i f_{\e,p}(u_i)\bigg),~~~~~~~~~\textrm{where } f_{\e,p}(x) = \bigg(1 + \frac{\e x}{p}\bigg)^p, \label{eq:smooth}
	\end{align}
	which written explicitly is $\psi_{\e,p}(u) = \frac{p}{\epsilon} \|\mathbf{1} + \frac{\epsilon u}{p}\|_p - \frac{p}{\epsilon}$. Notice that as $p$ goes to infinity, $f_{\e,p}(x)$ converges to $e^{\e x}$, and so $\psi_{\e,p}$ converges to the exp-sum function $\textrm{ExpSum}(w) = \frac{1}{\e} \ln \sum_i e^{\e w_i}$, a commonly used smoothing of $\ell_\infty$.
	
	One of the main properties that motivate our definition of $f_{\e,p}(x) = (1+ \frac{\e x}{p})^p$ is that its  derivative is much more stable than that of $x^p$ for $\pm 1$ perturbations: for example, $(x^p)'(0) = 0$ but $(x^p)'(1) = 1$, while $f'_{\e,p}(0) = \e$ and $f'_{\e,p}(1) = \e (1+\frac{\e}{p})^{p-1} \le \e e^\e \lesssim \e (1+\e)$). Such functions are also used for obtaining sharp estimates of moments of sums of random variables (see Section 1.5 of \cite{deLaPena}).
	
	Once we have the ``right'' definition of the smoothened function $\psi_{\e,p}$, it is not hard to prove that it satisfies properties \eqref{eq:propSmooth1}-\eqref{eq:propSmooth2}; the proof is found in Appendix \ref{app:smoothened}.

	\begin{lemma} \label{lemma:smoothened}
		Function $\psi_{\e,p}$ satisfies properties \eqref{eq:propSmooth1}-\eqref{eq:propSmooth2} with $R = \frac{p (m^{1/p}-1)}{\e}$.
	\end{lemma}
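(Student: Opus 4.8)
The plan is to verify the two properties of $\psi_{\e,p}$ separately, exploiting the generalized-$f$-mean structure $\psi_{\e,p}(u) = \frac{p}{\e}\|\ones + \frac{\e u}{p}\|_p - \frac{p}{\e}$. For property \eqref{eq:propSmooth1} I would first note that $\|\ones + \frac{\e u}{p}\|_p \ge \|\frac{\e u}{p}\|_p = \frac{\e}{p}\|u\|_p$ by translation toward a nonnegative vector (monotonicity of $\ell_p$ on the nonnegative orthant), which gives $\psi_{\e,p}(u) \ge \|u\|_p$. For the upper bound, the triangle inequality gives $\|\ones + \frac{\e u}{p}\|_p \le \|\ones\|_p + \frac{\e}{p}\|u\|_p = m^{1/p} + \frac{\e}{p}\|u\|_p$, hence $\psi_{\e,p}(u) \le \|u\|_p + \frac{p}{\e}(m^{1/p}-1)$, which is exactly the claimed additive error $R = \frac{p(m^{1/p}-1)}{\e}$.

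For property \eqref{eq:propSmooth2}, the stability of the gradient, I would compute $\nabla\psi_{\e,p}$ explicitly. Writing $z = \ones + \frac{\e u}{p}$, the chain rule gives the $i$-th partial derivative $\partial_i \psi_{\e,p}(u) = \left(z_i / \|z\|_p\right)^{p-1}$, i.e. the gradient is the normalized vector $(z^{p-1})/\|z\|_p^{p-1}$, which indeed lies in $\ell_q^+$ since $\|z^{p-1}\|_q^q = \sum_i z_i^{(p-1)q} = \sum_i z_i^p = \|z\|_p^p$ and $(p-1)q = p$. Now for $v \in [0,1]^m$ and $z' = z + \frac{\e v}{p}$, I need to compare $(z_i'/\|z'\|_p)^{p-1}$ with $(z_i/\|z\|_p)^{p-1}$ pointwise. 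Coordinatewise, $z_i \le z_i' \le z_i + \frac{\e}{p} = z_i(1 + \frac{\e}{p z_i}) \le z_i(1+\frac{\e}{p})$ since $z_i \ge 1$; similarly $\|z\|_p \le \|z'\|_p \le \|z\|_p(1+\frac{\e}{p})$ because $\|z\|_p \ge \|\ones\|_p = m^{1/p} \ge 1$. Therefore the ratio $z_i'/\|z'\|_p$ differs from $z_i/\|z\|_p$ by at most a factor $(1+\frac{\e}{p})^{\pm 1}$, and raising to the $(p-1)$ power multiplies the discrepancy factor to $(1+\frac{\e}{p})^{\pm(p-1)}$. Using $(1+\frac{\e}{p})^{p-1} \le (1+\frac{\e}{p})^p \le e^{\e}$ and the matching lower bound $(1+\frac{\e}{p})^{-(p-1)} \ge e^{-\e}$ (the latter needs a short estimate, e.g. $(1+\frac{\e}{p})^{p} \le e^{\e}$ so its reciprocal is $\ge e^{-\e}$, and $(1+\frac{\e}{p})^{-(p-1)} \ge (1+\frac{\e}{p})^{-p} \ge e^{-\e}$), we get $\nabla\psi_{\e,p}(u+v) = e^{\pm\e}\nabla\psi_{\e,p}(u)$ pointwise, as required.

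The only slightly delicate point is making the coordinatewise-ratio argument clean: one must track that both the numerator $z_i$ and the denominator $\|z\|_p$ are bounded below by $1$ so that an additive perturbation of size $\frac{\e}{p}$ translates into a multiplicative perturbation of size at most $1+\frac{\e}{p}$, and then that the per-coordinate multiplicative errors in numerator and denominator compound to at most $(1+\frac{\e}{p})^{\pm 1}$ in the ratio before the exponent $p-1$ is applied. I expect this bookkeeping — rather than any single inequality — to be the main obstacle, but it is entirely elementary. An alternative, perhaps cleaner, route for the gradient bound is to write $\partial_i\psi_{\e,p}(u) = f_{\e,p}'(u_i)/f_{\e,p}'(\psi_{\e,p}(u))$ using the generalized-mean identity $f_{\e,p}(\psi_{\e,p}(u)) = \sum_i f_{\e,p}(u_i)$, and then use that $f_{\e,p}'(x) = \e(1+\frac{\e x}{p})^{p-1}$ satisfies $f_{\e,p}'(x+s) = e^{\pm\e} f_{\e,p}'(x)$ for $s \in [0,1]$ together with the fact that $\psi_{\e,p}$ is $1$-Lipschitz and monotone, so adding $v \in [0,1]^m$ shifts both $u_i$ and $\psi_{\e,p}(u)$ by amounts in $[0,1]$; I would use whichever of these two presentations is shorter in the final writeup.
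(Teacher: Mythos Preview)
Your argument for the upper bound in property \eqref{eq:propSmooth1} and for the gradient stability \eqref{eq:propSmooth2} is correct and essentially matches the paper's (the paper bounds numerator and denominator of the partial derivative separately via the factorization $1 + \frac{\e(u_i+v_i)}{p} \le (1+\frac{\e u_i}{p})(1+\frac{\e v_i}{p})$, but your ratio-tracking is equivalent).

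There is, however, a genuine gap in your lower bound for property \eqref{eq:propSmooth1}. From the monotonicity step $\|\ones + \frac{\e u}{p}\|_p \ge \|\frac{\e u}{p}\|_p$ you only get
\[
\psi_{\e,p}(u) \;=\; \frac{p}{\e}\Big\|\ones + \tfrac{\e u}{p}\Big\|_p - \frac{p}{\e} \;\ge\; \|u\|_p - \frac{p}{\e},
\]
not $\psi_{\e,p}(u) \ge \|u\|_p$: the subtracted $\frac{p}{\e}$ does not vanish. What is actually required is the stronger inequality $\|\frac{p}{\e}\ones + u\|_p \ge \|u\|_p + \frac{p}{\e}$ for $u \ge 0$, which does not follow from coordinatewise monotonicity alone. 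The paper gets it by comparing $\frac{p}{\e}\ones$ not with $0$ but with $v := \frac{p}{\e}\,\frac{u}{\|u\|_p}$: since $u_i \le \|u\|_p$ one has $v \le \frac{p}{\e}\ones$ pointwise, hence $\|\frac{p}{\e}\ones + u\|_p \ge \|v + u\|_p = \big(1 + \frac{p}{\e\|u\|_p}\big)\|u\|_p = \|u\|_p + \frac{p}{\e}$.

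A smaller issue: in your alternative route for \eqref{eq:propSmooth2}, the claim that adding $v \in [0,1]^m$ shifts $\psi_{\e,p}(u)$ by an amount in $[0,1]$ is not true in general --- the shift can be as large as $\|v\|_p \le m^{1/p}$ --- so that presentation would need repair. Your primary argument for \eqref{eq:propSmooth2} is fine, so this does not matter once the lower-bound gap above is fixed.
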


	Now we formally state the $\psi$-based \smoothbaseline algorithm for the $\ell_q^+$ OLO problem.
	
		\begin{algorithm}[H]
		\caption{\smoothbaseline}
		\begin{algorithmic}[0]         
				\State Let $p$ be such that $\frac{1}{p} + \frac{1}{q} = 1$, define $w^0 = 0$.
        \For{each time $t$}
           \State Play vector $v^t \triangleq \nabla \psi_{\e,p}(w^1 + \ldots + w^{t-1})$
           \State Observe adversarial vector $w^t \in [0,1]^p$
        \EndFor
    \end{algorithmic}
    \end{algorithm}
	
	We show that this algorithm indeed outputs a solution to the $\ell_q^+$ OLO problem (i.e. $v^t \in \ell_q^+$) with low regret.
	
	\begin{theorem} \label{thm:OLO}
		For every $p \in (1, \infty)$ the \smoothbaseline algorithm outputs a solution to the $\ell_q^+$ OLO problem with $(\e, \frac{p(m^{1/p}-1)}{\e})$-regret.
	\end{theorem}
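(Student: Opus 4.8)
The plan is to establish two things separately: first, that every played vector $v^t$ is feasible, i.e.\ lies in $\ell_q^+$; and second, the claimed $(\e,R)$-regret bound with $R=\frac{p(m^{1/p}-1)}{\e}$, which I expect to fall out of combining the two properties of $\psi_{\e,p}$ from Lemma~\ref{lemma:smoothened} with a telescoping argument.

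For feasibility I would compute $\nabla\psi_{\e,p}$ explicitly. Writing $\psi_{\e,p}(u)=\frac{p}{\e}\|\mathbf 1+\frac{\e u}{p}\|_p-\frac{p}{\e}$ and applying the chain rule, the Jacobian $\frac{\e}{p}I$ of the inner affine map cancels the leading factor $\frac{p}{\e}$, so $\nabla\psi_{\e,p}(u)$ is just the gradient of $\|\cdot\|_p$ evaluated at $z:=\mathbf 1+\frac{\e u}{p}$; explicitly $(\nabla\psi_{\e,p}(u))_i=(z_i/\|z\|_p)^{p-1}$. Since all the $w$'s lie in $[0,1]^m$, the argument $W^{t-1}:=w^1+\cdots+w^{t-1}$ of the gradient defining $v^t$ is a sum of nonnegative vectors, so $z\ge\mathbf 1>0$, the norm is differentiable there, and $v^t\ge 0$. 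Moreover $\|v^t\|_q^q=\sum_i (z_i/\|z\|_p)^{(p-1)q}=\sum_i (z_i/\|z\|_p)^{p}=1$ using the \holder identity $(p-1)q=p$; hence $v^t\in\ell_q^+$ (in fact on its boundary).

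For the regret, set $W^t:=w^1+\cdots+w^t$, so $v^t=\nabla\psi_{\e,p}(W^{t-1})$ and $W^0=0$. The crux is the per-step inequality $\ip{w^t}{v^t}\ge e^{-\e}(\psi_{\e,p}(W^t)-\psi_{\e,p}(W^{t-1}))$. I would obtain it by writing the increment via the fundamental theorem of calculus as $\psi_{\e,p}(W^t)-\psi_{\e,p}(W^{t-1})=\int_0^1 \ip{w^t}{\nabla\psi_{\e,p}(W^{t-1}+x w^t)}\,dx$, then invoking the stability property~\eqref{eq:propSmooth2} with $u=W^{t-1}\in\R^m_+$ and $v=x w^t\in[0,1]^m$ (legitimate since $x\in[0,1]$) to bound $\nabla\psi_{\e,p}(W^{t-1}+x w^t)\le e^{\e}\,\nabla\psi_{\e,p}(W^{t-1})$ coordinatewise; since $w^t\ge 0$, each integrand is at most $e^{\e}\ip{w^t}{v^t}$, which gives the claim. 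Summing over $t=1,\dots,n$, the right-hand side telescopes to $e^{-\e}(\psi_{\e,p}(W^n)-\psi_{\e,p}(0))$. Finally I would apply the additive-error property~\eqref{eq:propSmooth1}, which gives $\psi_{\e,p}(W^n)\ge\|W^n\|_p$ and $\psi_{\e,p}(0)\le\|0\|_p+R=R$ (indeed $\psi_{\e,p}(0)=\frac{p}{\e}(m^{1/p}-1)=R$ directly from~\eqref{eq:smooth}), so that $\sum_t\ip{w^t}{v^t}\ge e^{-\e}(\|W^n\|_p-R)$; combined with norm duality~\eqref{eq:dualNorm}, which identifies $\|W^n\|_p=\max_{v\in\ell_q^+}\sum_t\ip{w^t}{v}$, this is exactly the regret inequality~\eqref{eq:regret}.

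I expect the only genuinely delicate point to be orienting the stability inequality correctly and recognizing why we are allowed to replace the true gradient along the whole segment $[W^{t-1},W^t]$ by $e^{\pm\e}$ times the gradient at the left endpoint $W^{t-1}$ — which is exactly what the smoothing buys us. For the unsmoothed norm this step fails: a coordinate currently equal to $0$ has gradient $0$, but that gradient jumps up as soon as the coordinate becomes positive, so the first-order term $\ip{w^t}{\nabla\|W^{t-1}\|_p}$ can grossly underestimate the true increment; the choice $f_{\e,p}(x)=(1+\e x/p)^p$ is engineered precisely so that multiplicative stability of the derivative survives $\pm1$ perturbations (cf.\ the discussion after~\eqref{eq:smooth}). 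Everything else — the sum of nonnegative vectors being nonnegative, the telescoping, the value of $\psi_{\e,p}(0)$, and the \holder identity — is routine bookkeeping.
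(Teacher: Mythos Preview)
Your proposal is correct and follows essentially the same approach as the paper: verify feasibility from the explicit gradient formula, obtain the per-step bound $e^\e\ip{w^t}{v^t}\ge\psi_{\e,p}(W^t)-\psi_{\e,p}(W^{t-1})$ via the stability property~\eqref{eq:propSmooth2}, telescope, and finish with~\eqref{eq:propSmooth1} and norm duality. The only cosmetic difference is that the paper obtains the per-step bound by first using convexity of $\psi_{\e,p}$ at the right endpoint $W^t$ (giving $\psi_{\e,p}(W^t)-\psi_{\e,p}(W^{t-1})\le\ip{\nabla\psi_{\e,p}(W^t)}{w^t}$) and then a single application of stability to pass from $\nabla\psi_{\e,p}(W^t)$ to $\nabla\psi_{\e,p}(W^{t-1})$, whereas you integrate along the segment and apply stability pointwise; both routes are equivalent here.
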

	
	\begin{proof}
		The fact that the actions $v^t$ played belong to $\ell_q^+$ follows directly from the expression of the gradient $\nabla \psi_{\e, p}$ (see equation \eqref{eq:derPsi} in the appendix, and notice $q = \frac{p}{p-1}$). So we just bound the regret of the algorithm; to simplify the notation we drop the subscripts from $\psi_{\e,p}$ and use $s^t = w^1 + \ldots + w^t$. 
		
		We need to show $e^\e \cdot \sum_t \ip{w^t}{v^t} \ge  \|s^n\| - \frac{p(m^{1/p}-1)}{\e}$. The main idea is to relate the value obtained by the algorithm to the smoothened function $\psi$, showing
		\begin{align}
			\textstyle e^\e \cdot \sum_{t=1}^n \ip{w^t}{v^t} \ge \psi(s^n) - \psi(0). \label{eq:regretOLO}
		\end{align}
		 First, the convexity of $\|.\|_p$ directly implies that $\psi$ is convex. Thus, for every time step $t$ we have $\psi(s^{t-1}) \ge \psi(s^t) + \ip{\nabla \psi(s^t)}{-w^t}$, or equivalently $\psi(s^t) - \psi(s^{t-1}) \le \ip{\nabla \psi(s^t)}{w^t}$. Since Lemma \ref{lemma:smoothened} guarantees that $\psi$ satisfies the gradient stability property \eqref{eq:propSmooth2}, we can upper bound the right-hand side of this expression to obtain $\psi(s^t) - \psi(s^{t-1}) \le e^\e \ip{\nabla \psi(s^{t-1})}{w^t} = e^\e \ip{v^t}{w^t}$. Adding over all $t$'s then gives inequality \eqref{eq:regretOLO}.
		 
		 From Lemma \ref{lemma:smoothened} we have the comparison $\psi(s^n) \ge \|s^n\|_p$, and notice that $\psi(0) = \frac{p (m^{1/p} - 1)}{\e}$;
		 employing these observation to inequality \eqref{eq:regretOLO} gives $e^\e \cdot \sum_t \ip{w^t}{v^t} \ge \|s^n\|_p - \frac{p(m^{1/p}-1)}{\e}$, thus concluding the proof.
	\end{proof}

	We remark that the idea of using the gradient of a smoothened baseline to obtain a low regret OLO algorithm was already used in~\cite{abernethy14}. However, our notion of smoothness is different from the ones they used (partially because we are interested in multiplicative/additive regret), and their results cannot be directly applied to obtain the regret of Theorem \ref{eq:regretOLO}.


	\section{Handling correlations of the random-order model} 
	
Informally, one of the difficulties of analyzing algorithms in the random-order model is that, unlike in the i.i.d. model, there are correlations between jobs in different time steps because they are being sampled \textbf{without replacement} from the underlying collection of jobs. In this section we control the correlations of vectors in the random-order model, which will be crucial for analyzing algorithms for the \GLB problem. Interestingly, we use the OLO algorithm \smoothbaseline to prove this purely probabilistic inequality (see~\cite{rakhlin} for another connection between OLO algorithms and martingale concentration inequalities).
	
	\begin{lemma} \label{lemma:breakCorr}
		Consider a set of vectors $\{y^1, \ldots, y^n\} \in [0,1]^m$ and let $\bY^1, \ldots, \bY^t$ be sampled without replacement from this set. Let $\bZ$ be a random vector in $\ell_q^+$ that depends only on $\bY^1, \ldots, \bY^{t-1}$. Then for all $\e > 0$, $$\E \ip{\bY^t}{\bZ} \le e^\e \|\E \bY^t\|_p + \frac{1}{n - (t-1)} \cdot \frac{p (m^{1/p} -1)}{\e}.$$
	\end{lemma}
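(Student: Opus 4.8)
\medskip

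The plan is to reduce this probabilistic statement to the regret bound of the \smoothbaseline algorithm (Theorem~\ref{thm:OLO}). The key observation is that sampling without replacement from $\{y^1, \ldots, y^n\}$ can be coupled with a run of the OLO algorithm against a particular (random) adversary, and that the reward collected by the OLO player in a single step is exactly $\ip{\bY^t}{\bZ}$ when $\bZ$ is the gradient action it plays. So first I would set up this coupling: draw a uniformly random permutation $\bpi$ of $\{1, \ldots, n\}$, let $\bY^j = y^{\bpi(j)}$, and consider the adversary that reveals $w^j = \bY^j$ in round $j$. Against this adversary \smoothbaseline plays $v^j = \nabla\psi_{\e,p}(\bY^1 + \cdots + \bY^{j-1})$, which after the first $t-1$ rounds is a vector in $\ell_q^+$ depending only on $\bY^1, \ldots, \bY^{t-1}$ — exactly the kind of object $\bZ$ is. Since Theorem~\ref{thm:OLO} holds pointwise (for every adversary sequence, hence for every realization of $\bpi$), I can apply it to the length-$t$ prefix: $e^\e \sum_{j=1}^{t} \ip{\bY^j}{v^j} \ge \|\bY^1 + \cdots + \bY^t\|_p - \frac{p(m^{1/p}-1)}{\e}$.

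\medskip

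The next step is to extract the single term $\ip{\bY^t}{v^t}$ from this telescoped sum. The clean way is to exploit exchangeability: conditioned on the \emph{set} $\{\bY^1, \ldots, \bY^t\}$ (equivalently, on which $t$ of the $n$ vectors were selected in the first $t$ rounds, ignoring order), the $t!$ orderings are equally likely, so $\E[\ip{\bY^t}{v^t} \mid \{\bY^1, \ldots, \bY^t\}] = \frac{1}{t}\sum_{j=1}^{t}\E[\ip{\bY^j}{v^j}\mid \{\bY^1,\ldots,\bY^t\}]$, where on the right the $v^j$ are the gradients produced by the prefix ordering. Taking expectations and combining with the pointwise OLO inequality gives
\[
\E\,\ip{\bY^t}{v^t} \;=\; \frac{1}{t}\,\E\!\sum_{j=1}^{t}\ip{\bY^j}{v^j} \;\ge\; \frac{1}{t}\,e^{-\e}\Big(\E\,\big\|\textstyle\sum_{j=1}^{t}\bY^j\big\|_p - \tfrac{p(m^{1/p}-1)}{\e}\Big).
\]
This is the wrong direction (we want an \emph{upper} bound on $\E\ip{\bY^t}{\bZ}$), which signals that the argument must instead be run ``in reverse'': I should apply the OLO guarantee not to obtain a lower bound on the algorithm's reward, but to bound $\ip{\bY^t}{v^t}$ from above by a telescoping difference $\psi(s^t) - \psi(s^{t-1})$ using the convexity-plus-stability inequality $\psi(s^t) - \psi(s^{t-1}) \le e^\e \ip{v^t}{w^t}$ from the proof of Theorem~\ref{thm:OLO} — rearranged as $\ip{v^t}{\bY^t} \le e^{-\e}\big(\psi(s^t) - \psi(s^{t-1})\big)$ — wait, that direction needs $\psi$ superadditive along the path, which is not what we have. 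The correct route: use that $\psi$ is convex so $\ip{\nabla\psi(s^{t-1})}{\bY^t} \le \psi(s^t) - \psi(s^{t-1})$ directly (first-order underestimate at $s^{t-1}$), giving $\ip{v^t}{\bY^t} \le \psi(s^t) - \psi(s^{t-1})$ \emph{pointwise}. Now average over orderings of the $t$-prefix as above: $\E\,\ip{v^t}{\bY^t} \le \frac{1}{t}\E\big[\psi(s^t) - \sum_{j}(\psi(s^{j}) - \psi(s^{j-1})) \text{ telescoped appropriately}\big]$ — more cleanly, $\frac{1}{t}\sum_{j=1}^t (\psi(s^j)-\psi(s^{j-1})) = \frac{1}{t}(\psi(s^t)-\psi(0))$, so $\E\,\ip{v^t}{\bY^t} \le \frac{1}{t}\,\E[\psi(s^t) - \psi(0)] = \frac{1}{t}(\E\,\psi(s^t) - \frac{p(m^{1/p}-1)}{\e})$.

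\medskip

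The final step is to pass from $\frac{1}{t}\E\,\psi(s^t)$ to $e^\e\|\E\,\bY^t\|_p$ and to fix the index shift (the lemma's $t$ corresponds to my $t-1$ predecessors plus the current sample, so the denominator $n-(t-1)$ will appear once I align the exchangeability argument with the actual "$\bZ$ depends on $\bY^1,\ldots,\bY^{t-1}$" setup; concretely the right exchangeable conditioning is on $\{\bY^1,\ldots,\bY^{t}\}$ of size $t$, giving the $\frac1t$ — but the lemma's $\frac{1}{n-(t-1)}$ factor means the averaging should instead be over a set of size $n-(t-1)$, i.e. over the \emph{remaining} pool from which $\bY^t$ is uniformly drawn given the past). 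So the cleaner bookkeeping: condition on $\bY^1,\ldots,\bY^{t-1}$ (hence on $\bZ$); then $\bY^t$ is uniform over the $n-(t-1)$ unused vectors, and applying the one-step convexity bound and then the OLO telescoping over a hypothetical continuation, divided by the pool size $n-(t-1)$, yields the stated factor. For the norm: by convexity of $\|\cdot\|_p$ and Jensen, $\E\,\psi(s^t) \le$ (additive-error property \eqref{eq:propSmooth1}) $\E\|s^t\|_p + R$; but we need it the other way, so instead use $\psi(s^t) \le \|s^t\|_p + R$ and bound $\E\|s^t\|_p$ — and here the "$\frac1t$ then multiply back" structure together with $s^t$ being a sum of $t$ exchangeable unit-cube vectors with $\E s^t = t\,\E\bY^t$ gives, via Jensen on the concave-in-scaling... — the honest statement is that this last estimate is exactly where the $e^\e$ in front of $\|\E\bY^t\|_p$ is purchased, by comparing $\frac1t\psi(\sum_{j\le t}\bY^j)$ against $\psi(\E\bY^t \cdot \text{something})$ using the stability property \eqref{eq:propSmooth2} to absorb the difference between a random vector and its mean into a multiplicative $e^{\pm\e}$ factor, then $\psi \le \|\cdot\|_p + R$ once more. \textbf{The main obstacle} I anticipate is precisely this last manipulation: getting the multiplicative constant on $\|\E\bY^t\|_p$ down to $e^\e$ (rather than something like $e^{t\e}$) requires using the gradient-stability property at the level of the \emph{gradient/reward}, not naively iterating the additive bound — i.e. the right move is to apply $\E\,\ip{\bY^t}{\bZ} \le \E\,\ip{\bY^t}{\nabla\psi(\text{past})}$ and then bound $\E\,\ip{\bY^t}{\nabla\psi(\cdot)}$ by $e^\e\,\ip{\E\bY^t}{\nabla\psi(0)} + (\text{telescope})/(n-(t-1))$, exploiting that $\nabla\psi(0) = \frac1{m^{1/p}}\mathbf 1$-type vector lies in $\ell_q^+$ so $\ip{\E\bY^t}{\nabla\psi(0)} \le \|\E\bY^t\|_p$ by \eqref{eq:dualNorm}. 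Making that chain precise, with the stability error accumulating only once over the $n-(t-1)$ remaining steps and thus contributing the $\frac{1}{n-(t-1)}\cdot\frac{p(m^{1/p}-1)}{\e}$ term, is the technical heart of the proof.
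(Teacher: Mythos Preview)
Your proposal circles around several of the right ingredients --- the OLO regret of \smoothbaseline, convexity of $\psi_{\e,p}$, conditioning on the past --- but it does not converge on a working argument, and the obstacle you flag at the end is real and unresolved.

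The paper's proof is much more direct, and the step you are missing is both simple and decisive. First, condition on $\bY^1,\ldots,\bY^{t-1}$ so that $\bZ$ is deterministic; then $\E_{t-1}\ip{\bY^t}{\bZ} = \ip{\E_{t-1}\bY^t}{\bZ} \le \|\E_{t-1}\bY^t\|_p$ by duality \eqref{eq:dualNorm}, since $\bZ \in \ell_q^+$. This holds for \emph{any} such $\bZ$, not just gradients of $\psi_{\e,p}$, which is important since the lemma is stated for arbitrary $\bZ$. Now $\E_{t-1}\bY^t$ is the average of the $n-(t-1)$ remaining vectors, and by symmetry has the same distribution as $\frac{1}{n-(t-1)}(\bY^1+\cdots+\bY^{n-(t-1)})$. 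So everything reduces to an auxiliary inequality (Lemma~\ref{lemma:corrSum}): $\E\|\bY^1+\cdots+\bY^\kappa\|_p \le e^\e\|\E(\bY^1+\cdots+\bY^\kappa)\|_p + \frac{p(m^{1/p}-1)}{\e}$.

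The crucial idea you are missing, and the reason your attempts either go the wrong direction or become circular, is how this auxiliary inequality is proved. If you run \smoothbaseline directly on the without-replacement sequence $\bY^1,\ldots,\bY^\kappa$ and try to bound $\E\ip{\bY^j}{v^j}$, you land back on the very quantity you started with (since $v^j$ depends on $\bY^1,\ldots,\bY^{j-1}$). The paper breaks this circularity with \emph{Hoeffding's Comparison Lemma}: replace the without-replacement samples by i.i.d.\ samples $\tilde\bY^j$ drawn with replacement; this can only increase $\E\|\sum_j \tilde\bY^j\|_p$ (convexity of the norm), and leaves the mean unchanged. Now run \smoothbaseline on the $\tilde\bY^j$'s: by genuine independence, $\E\ip{\tilde\bY^j}{v^j} = \ip{\E\tilde\bY^j}{\E v^j} \le \|\E\tilde\bY^j\|_p$ (using $\|\E v^j\|_q \le 1$), and summing gives the claim. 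Your exchangeability-averaging maneuver is an attempt to manufacture this independence synthetically, but it does not deliver: averaging over orderings of a fixed set still leaves $v^j$ correlated with $\bY^j$ in a way you cannot decouple without the i.i.d.\ reduction. Likewise, your final idea of comparing $\nabla\psi(\text{past})$ to $\nabla\psi(0)$ via stability \eqref{eq:propSmooth2} fails because the past sum can have coordinates much larger than $1$, so the one-step stability bound does not apply.
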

	
	(Recall $\E\bY^t$ denotes the vector obtained by taking component-wise expectation.)
	To understand the meaning of this lemma, notice that if $z$ is a \textbf{fixed} vector in $\ell^+_q$ (or simply independent of $\bY^t$), then $\E \ip{\bY^t}{z} = \ip{\E \bY^t}{z} \le \|\E \bY^t\|_p$. On the other hand, if $\bZ$ is highly correlated to $\bY^t$, say $\bZ = \frac{\bY^t}{\|\bY^t\|_q}$, then we only have $\E \ip{\bY^t}{\bZ} = \E \|\bY^t\|_p$, which in general can be arbitrarily larger than $\|\E \bY^t\|_p$ (e.g. if $\E \bY^t = 0$). 

	The main element for proving Lemma \ref{lemma:breakCorr} is to show that because $\bY^t$ is \textbf{bounded and non-negative}, actually $\E \|\bY^t\|_p \approx \|\E \bY^t\|_p$; more precisely, we show $\E \|\bY^1 + \ldots + \bY^t\|_p \lesssim \|\E \bY^1 + \ldots + \E \bY^t\|_p$. This was proved in \cite{guptaMolinaro} for the special case $p = \infty$ using a maximal inequality, but can also be proved using Bernstein's inequality to obtain concentration for each coordinate of the sum $\bY^1+ \ldots + \bY^t$, taking a union bound to obtain concentration of the norm $\|\bY^1+ \ldots + \bY^t\|_p$, and then integrating its tail. However, for general $p$ the union bound is loose and bound obtained has an extra $\log m$ factor. We use the OLO algorithm \smoothbaseline and Hoeffding's Comparison Lemma\mmnote{Add this to intro?} to quickly provide a bound without such extra factor.
	
	\begin{lemma} \label{lemma:corrSum}
		Consider a set of vectors $\{y^1, \ldots, y^n\} \in [0,1]^m$ and let $\bY^1, \ldots, \bY^{\kappa}$ be sampled without replacement from this set. Then for all $\e > 0$ $$\textstyle\E \|\bY^1 + \ldots + \bY^\kappa\|_p \le e^{\e}\, \|\E \bY^1 + \ldots + \E \bY^\kappa\|_p + \frac{p (m^{1/p} - 1)}{\e}$$
	\end{lemma}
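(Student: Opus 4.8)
The plan is to derive Lemma~\ref{lemma:corrSum} by feeding the right adversarial sequence into the \smoothbaseline algorithm and exploiting its regret guarantee (Theorem~\ref{thm:OLO}) together with a ``without replacement is at most with replacement'' comparison. Concretely, fix the outcome of the random draw $\bY^1,\ldots,\bY^\kappa$ and run \smoothbaseline with $w^t = \bY^t$. On the one hand, since each $v^t \in \ell_q^+$ is chosen \emph{before} $w^t$ is revealed and $v^t$ is a deterministic function of $\bY^1,\ldots,\bY^{t-1}$, we have (by norm duality) $\ip{v^t}{\bY^t} \le \|\bY^t\|_\infty \cdot$ -- no, more carefully: the regret bound reads $\sum_t \ip{w^t}{v^t} \ge e^{-\e}\big(\|\sum_t w^t\|_p - R\big)$ with $R = \frac{p(m^{1/p}-1)}{\e}$. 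Rearranging and taking expectations over the random order,
\[
 \E\,\Big\|\sum_{t=1}^\kappa \bY^t\Big\|_p \;\le\; e^{\e}\,\E\sum_{t=1}^\kappa \ip{\bY^t}{\bv^t} \;+\; R .
\]
So the whole job reduces to showing $\E\sum_t \ip{\bY^t}{\bv^t} \le \|\sum_t \E\bY^t\|_p$, i.e.\ that the \emph{expected reward of \smoothbaseline in the random-order instance is at most the offline optimum of the ``averaged'' instance}.

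The key step is the correlation/comparison argument for $\E\sum_t\ip{\bY^t}{\bv^t}$. Here I would condition on the past: $\E[\ip{\bY^t}{\bv^t}\mid \bY^1,\ldots,\bY^{t-1}] = \ip{\bv^t}{\,\E[\bY^t\mid \bY^1,\ldots,\bY^{t-1}]\,}$ since $\bv^t$ is measurable w.r.t.\ the past. The sampling-without-replacement conditional mean $\E[\bY^t\mid\text{past}]$ is the uniform average of the not-yet-drawn vectors, which is \emph{not} equal to $\frac{1}{n}\sum_i y^i$, so one cannot immediately bound this by $\|\bv^t\|_q\cdot(\text{something clean})$. The standard way around this is the classical fact that for sampling without replacement the sequence of partial sums is dominated — in convex order — by the corresponding i.i.d.\ (with replacement) partial sums; equivalently, Hoeffding's comparison lemma lets us replace $\{\bY^t\}$ sampled without replacement by $\{\tilde\bY^t\}$ sampled i.i.d.\ uniformly from the set, for the purpose of bounding $\E$ of any convex function of the partial sums. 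Since $\|\cdot\|_p$ is convex, $\E\|\sum_t\bY^t\|_p \le \E\|\sum_t\tilde\bY^t\|_p$, and now the $\tilde\bY^t$ are i.i.d.\ with $\E\tilde\bY^t = \mu := \frac1n\sum_i y^i$; redo the \smoothbaseline computation in the i.i.d.\ world, where $\E[\tilde\bY^t\mid\text{past}] = \mu$ exactly, giving $\E\sum_t\ip{\tilde\bY^t}{\tilde\bv^t} = \E\sum_t\ip{\tilde\bv^t}{\mu} \le \sum_t \E\|\tilde\bv^t\|_q^{} \cdot\|\mu\|_p \le \kappa\|\mu\|_p = \|\sum_t\E\tilde\bY^t\|_p$, using $\|\tilde\bv^t\|_q\le 1$ and $\E\tilde\bY^t = \E\bY^t = \mu$.

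Putting the two halves together yields $\E\|\sum_t\bY^t\|_p \le e^\e\,\|\sum_t\E\bY^t\|_p + \frac{p(m^{1/p}-1)}{\e}$, which is exactly the claim. The step I expect to be the main obstacle — or at least the one requiring the most care — is justifying the convex-order domination of without-replacement partial sums by with-replacement ones in the vector-valued setting and checking that it can be applied through the two-layer argument (inside the expectation of $\|\cdot\|_p$ composed with a sum, where the summands $\bv^t$ themselves depend on the sample). An alternative that sidesteps subtleties is to keep everything in the without-replacement world and argue directly: run \smoothbaseline on $\bY^1,\ldots,\bY^\kappa$, note $\sum_t\E[\ip{\bv^t}{\bY^t}\mid\text{past}]$ telescopes against the remaining-population averages, and bound the resulting sum using $\|\bv^t\|_q\le 1$ together with the fact that the \emph{total} mass $\sum_t\E[\bY^t\mid\text{past}]$ summed appropriately is controlled by $\|\sum_i y^i\|_p$ — but the cleanest exposition is the comparison-lemma route, at the cost of invoking Hoeffding's lemma as a black box.
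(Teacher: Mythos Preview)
Your proposal is correct and follows essentially the same route as the paper: first apply Hoeffding's comparison lemma to pass from without-replacement to i.i.d.\ sampling (using only that $\|\cdot\|_p$ is a convex function of the sum, so no $\bv^t$'s are involved at this stage), then run \smoothbaseline on the i.i.d.\ sequence and use independence of $\tilde\bY^t$ from the past to get $\E\ip{\tilde\bY^t}{\tilde\bv^t} = \ip{\mu}{\E\tilde\bv^t} \le \|\mu\|_p$. Your worry about a ``two-layer'' subtlety is therefore unfounded---Hoeffding is applied only to $\E\|\sum_t\bY^t\|_p$, and the OLO argument is carried out entirely in the i.i.d.\ world afterward, exactly as you sketch.
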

	
	\begin{proof}
		We show it suffices to prove the result for i.i.d vectors. Let $\tilde{\bY}^1, \ldots, \tilde{\bY}^{\kappa}$ be i.i.d sampled \textbf{with replacement} from $\{y^1, \ldots, y^n\}$. Then $\tilde{\bY}^t$ has the same expectation as $\bY^t$ and hence $\|\E \bY^1 + \ldots + \E \bY^{\kappa}\|_p = \|\E \tilde{\bY}^1 + \ldots + \E \tilde{\bY}^{\kappa}\|_p$. Moreover, Hoeffding's Comparison Lemma \cite{hoeffding,gross} gives that for every continuous convex function $f$, $\E f(\bY^1 + \ldots + \bY^{\kappa}) \le \E f(\tilde{\bY}^1 + \ldots + \tilde{\bY}^{\kappa})$; since the norm $\|.\|_p$ satisfies these properties, $\E\|\bY^1 + \ldots + \bY^{\kappa}\|_p \le \E\|\tilde{\bY}^1 + \ldots + \tilde{\bY}^{\kappa}\|_p$. Thus, it suffices to prove the lemma for the i.i.d variables\short{.}{ $\tilde{\bY}^1, \ldots, \tilde{\bY}^{\kappa}$.}
		
		For that, run the OLO algorithm \smoothbaseline over the input sequence $\tilde{\bY}^1, \ldots, \tilde{\bY}^{\kappa}$, letting $\bZ^1, \ldots, \bZ^{\kappa}$ be the vectors played by the algorithm. Using the guarantee of this algorithm (Theorem \ref{thm:OLO}) for every scenario and taking expectations, we have
		\begin{align}
			\sum_{t \le {\kappa}} \E \ip{\tilde{\bY}^t}{\bZ^t} \ge e^{-\e} \left(\E \bigg\|\sum_{t \le {\kappa}} \tilde{\bY}^t\bigg\|_p - \frac{p (m^{1/p} - 1)}{\e}\right) \label{eq:corr1}
		\end{align}
		Since $\bZ^t$ only depends on $\tilde{\bY}^1, \ldots, \tilde{\bY}^{t-1}$ and $\tilde{\bY}^t$ is independent from these variables, we have $\E \ip{\tilde{\bY}^t}{\bZ^t} = \ip{\E\tilde{\bY}^t}{\E\bZ^t}$. Moreover, the $\tilde{\bY}^t$'s are identical, so their expectations equal $\frac{1}{n}$ of $\mu := \E \tilde{\bY^1} + \ldots + \E \tilde{\bY^{\kappa}}$, \short{hence}{and hence}
		\begin{align*}
			\textstyle\sum_{t \le \kappa} \E \ip{\tilde{\bY}^t}{\bZ^t} = \sum_{t \le \kappa} \ip{\frac{\mu}{n}}{\E \bZ^t} =   \ip{\mu}{\frac{1}{n} \sum_{t\le \kappa}\E \bZ^t} \le \|\mu\|_p,
		\end{align*}
		where the last inequality follows from the fact that $\|\frac{1}{n} \sum_{t \le \kappa} \bZ^t\|_q \le 1$ (since each $\|\bZ^t\|_q \le 1$ and $\|.\|_q$ is convex) and inequality \eqref{eq:dualNorm}. Employing this \short{}{bound} on inequality \eqref{eq:corr1} gives the desired inequality and concludes the proof.
	\end{proof}

	\begin{proof}[of Lemma \ref{lemma:breakCorr}]
		Let $\E_{t-1}$ denote the expectation conditioned on $\bY^1, \ldots, \bY^{t-1}$. We break the expectation as $\E \ip{\bY^t}{\bZ} = \E \E_{t-1} \ip{\bY^t}{\bZ}$. Again, since $\bZ$ is determined by $\bY^1, \ldots, \bY^{t-1}$ and belongs to $\ell^+_q$, we get $\E_{t-1} \ip{\bY^t}{\bZ} = \ip{\E_{t-1} \bY^t}{\bZ} \le \|\E_{t-1} \bY^t\|_p$; thus it suffices to upper bound this quantity in expectation.
		
		Let $\mu = \frac{1}{n} \sum_{t'} y^{t'}$ be the average of the vectors. Since $\bY^t$ is uniformly sampled from the vectors that have not appeared in the samples $\bY^1, \ldots, \bY^{t-1}$, we have that its conditional expectation is $$\E_{t-1} \bY^t = \frac{n \mu - (\bY^1 + \ldots + \bY^{t-1})}{n - (t-1)}.$$ Moreover, notice that $n \mu - (\bY^1 + \ldots + \bY^{t-1})$ (i.e. the sum of the remaining $n - (t-1)$ vectors) has the same distribution as $\bY^1 + \ldots + \bY^{n - (t-1)}$, so $\E_{t-1} \bY^t$ has the same distribution as $\frac{\bY^1 + \ldots + \bY^{n - (t-1)}}{n - (t-1)}$. Then using Lemma \ref{lemma:corrSum} we can upper bound the expected value of $\|\E_{t-1} \bY^t\|$ as
		\begin{align*}
			\E \|\E_{t-1} \bY^t\|_p \le \frac{1}{n-(t-1)} \left[e^{\e}\,\|(n - (t-1) \mu\|_p + \frac{p (m^{1/p}- 1)}{\e} \right]
		\end{align*}  
		for all $\e \in (0,1]$. Reorganizing this expression concludes the proof.
	\end{proof}


	\section{Greedy algorithm for the \GLB problem} \label{sec:greedy}

	Now we return to our main problem of interest, the \lpGLoadBal problem, defined in the introduction. In this section we consider the greedy algorithm with restart at time $n/2$, which can be more formally described as follows:
	\begin{algorithm}[H]
		\caption{\greedywr}
		\begin{algorithmic}[0]
				\For{time $t = 1, \ldots, \frac{n}{2}$}
					\State Select $\bar{x}^t$ to minimize the load $\|\sum_{\tau = 1}^t A^\tau \bar{x}^\tau\|_p$
				\EndFor
				\For{time $t = \frac{n}{2} + 1, \ldots, n$}
					\State Select $\bar{x}^t$ to minimize the load $\|\sum_{\tau = \frac{n}{2} + 1}^t A^\tau \bar{x}^\tau\|_p$
				\EndFor
    \end{algorithmic}
    \end{algorithm}
    
		Also recall that Theorem \ref{thm:greedy} presented in the introduction states the worst-case and random-order guarantees of this algorithm; in the remainder of this section we prove this theorem. 
		
		Since the greedy algorithm without restart is $O(p)$-competitive in the worst-case, it is straightforward to show that \greedywr also inherits this guarantee: by triangle inequality, the load of the algorithm is $\|\sum_t A^t \bar{x}^t\|_p \le \|\sum_{t \le n/2} A^t \bar{x}^t\|_p + \|\sum_{t > n/e} A^t \bar{x}^t\|_p$; but these terms are respectively at most $O(p)$ times the optimal load for the first and second half of the instance, each of which is at most $\OPT$, thus concluding the argument. Therefore, it suffices to analyze the random-order behavior of the algorithm, proving part (b) of the theorem.
    
	So we use $\bA^t$ to denote the random matrix that arrives at time $t$ and $\bar{\bx}^t$ to denote the random fractional assignment output by \greedywr. Also let $\tilde{\bx}^t$ be the optimal offline decision for time $t$.\footnote{More formally, let $\{\tilde{x}^t\}_t$ be an optimal solution for the offline instance, and let $\bm{\sigma}$ be the random permutation of $[n]$ such that $\bA^t = A^{\bm{\sigma}(t)}$; then define $\tilde{\bx}^t := \tilde{x}^{\bm{\sigma}(t)}$.}
	  Because of the restart, and random order, the load vectors obtained by \greedywr in the first and second half of the process, namely $\sum_{t \le n/2} \bA^t \bar{\bx}^t$ and $\sum_{t > n/2} \bA^t \bar{\bx}^t$, have the same distribution. 
	Again due to triangle inequality, it then suffices to analyze the first half and show that
	\begin{align}
		\E\,\bigg\|\sum_{t \le n/2} \bA^t \bar{\bx}^t\bigg\|_p \le \bigg(\frac{e^{\e}}{2} + \e\bigg) \OPT + m^{1/p} + \frac{(p-1)\, m^{1-1/p}}{2\e} + \frac{p\, m^{1/p}}{\e}; \label{eq:greedyHalf}
	\end{align}
	notice that this implies the bound in Theorem \ref{thm:greedy} because $e^\e \le 1+2\e$ for $\e \in [0,1]$ and $m^{1/p} \le m^{1-1/p}$ for $p  \ge 2$.
	To simplify the notation, let $\bS^t = \sum_{t' \le t} \bA^{t'} \bar{\bx}^{t'}$ be the random load vector of \greedywr up to time $t$.
	
	The main tool for analyzing the load increments $\|\bS^t\|_p - \|\bS^{t-1}\|_p$ incurred by the algorithm is the following estimate for the $\ell_p$-norm. One of its crucial features is that it shows that the linearization of the $\ell_p$-norm is increasingly better as we move away from the origin. It is a quick corollary of the optimal modulus of strong smoothness of the square of the $\ell_p$-norm recently proved in~\cite{AMS}, and is proved in Appendix \ref{app:proofLinLp}.
   
   \begin{lemma}\label{lemma:linLp}
   		Consider $p \in [2, \infty)$ and let $q$ be its \holder conjugate. Then for every non-negative vectors $u \in \R^n_+ \setminus \{0\}$ and $v \in \R^n_+$, there is a vector $g(u) \in \R^n_+$ with $\|g(u)\|_q \le 1$ such that $$\textstyle\|u + v\|_p \le \|u\|_p + \ip{g(u)}{v} + \frac{(p-1) \|v\|^2_p}{2 \|u\|_p}.$$
   \end{lemma}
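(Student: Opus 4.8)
The plan is to obtain the bound as a short consequence of the optimal strong smoothness of the squared $\ell_p$-norm from~\cite{AMS}, combined with an elementary trick to pass from the squared norm back to the norm itself. Recall that result states that for $p \ge 2$ the function $\Phi(x) := \tfrac12 \|x\|_p^2$ is $(p-1)$-strongly smooth with respect to $\|\cdot\|_p$; that is, for all $x \ne 0$ and all $y$,
\[
\tfrac12 \|x + y\|_p^2 \;\le\; \tfrac12\|x\|_p^2 + \ip{\nabla \Phi(x)}{y} + \tfrac{p-1}{2}\|y\|_p^2 ,
\]
and when $x \ge 0$ one has $\nabla\Phi(x) = \|x\|_p^{2-p}\,(x_i^{p-1})_i = \|x\|_p \cdot g(x)$, where I define $g(x) := \|x\|_p^{1-p}(x_i^{p-1})_i$.

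First I would verify that $g(u)$ is the vector claimed in the statement. It is coordinatewise non-negative since $u \ge 0$, and a one-line computation gives $\|g(u)\|_q = \|u\|_p^{1-p}\big(\sum_i u_i^{(p-1)q}\big)^{1/q} = \|u\|_p^{1-p}\,\|u\|_p^{p/q} = 1$, using the \holder relations $(p-1)q = p$ and $p/q = p-1$. Note in particular that $g(u)$ depends only on $u$, not on $v$.

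Next I would apply the smoothness inequality with $x = u$ (non-zero and non-negative) and $y = v \ge 0$, and multiply through by $2$, obtaining $\|u+v\|_p^2 \le A$, where $A := \|u\|_p^2 + 2\|u\|_p \ip{g(u)}{v} + (p-1)\|v\|_p^2$ (which is manifestly non-negative). To convert this into a linear upper bound for $\|u+v\|_p$ I would use the AM--GM inequality $\sqrt{A} \le \frac{A + B}{2\sqrt B}$, valid for every $A \ge 0$ and $B > 0$ (it is just $2\sqrt{AB}\le A+B$), applied with $B = \|u\|_p^2 > 0$. Since $\|u+v\|_p = \sqrt{\|u+v\|_p^2} \le \sqrt A$, this yields $\|u+v\|_p \le \frac{A + \|u\|_p^2}{2\|u\|_p}$, and dividing the numerator term by term by $2\|u\|_p$ produces exactly $\|u\|_p + \ip{g(u)}{v} + \frac{(p-1)\|v\|_p^2}{2\|u\|_p}$, as claimed.

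There is essentially no obstacle here beyond invoking~\cite{AMS} in the correct normalization: the only points worth double-checking are that the strong-smoothness constant is exactly $p-1$ (this is what makes the $\tfrac{p-1}{2}$ coefficient in the lemma the right one) and that the step $\sqrt A \le \frac{A+B}{2\sqrt B}$ is lossless when $v = 0$ (equality holds at $A = B$), so that the linearization error term indeed has the advertised order $\|v\|_p^2/\|u\|_p$. The hypothesis $u \ne 0$ is used precisely to guarantee $B > 0$ and that $\nabla\Phi(u)$ is well defined.
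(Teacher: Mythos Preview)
Your proof is correct and essentially identical to the paper's. Both invoke the strong smoothness of $\tfrac12\|\cdot\|_p^2$ from~\cite{AMS}, define the same vector $g(u)=\|u\|_p^{1-p}(u_i^{p-1})_i$ (the paper writes it as $h(u)/(2\|u\|_p)$), and then pass from the squared norm to the norm by the same elementary inequality: your AM--GM step $\sqrt{A}\le\frac{A+B}{2\sqrt B}$ with $B=\|u\|_p^2$ is exactly the paper's $\sqrt{1+\alpha}\le 1+\alpha/2$ after writing $A=B(1+\alpha)$.
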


   Now we analyze algorithm \greedywr. We handle separately the initial time steps where the load is small, so define the stopping time $\btau = \min\{ t \le n/2 : \|\bS^t\|_p > \e \OPT \}$ (set $\btau = n/2$ for the scenarios with $\|\bS^{n/2}\|_p \le \e \OPT$), load of the algorithm up to time $n/2$ can be written as $\|\bS^{n/2}\|_p = \|\bS^{\btau}\|_p + \sum_{t = \btau + 1}^{n/2} (\|\bS^t\|_p - \|\bS^{t-1}\|_p)$. From the greedy property we have the load $\|\bS^t\|_p$ is at most $\|\bS^{t-1} + \tilde{\bell}^t\|_p$, where $\tilde{\bell}^t = \bA^t \tilde{\bx}^t$ is the load incurred by $\OPT$ at time $t$. Thus, employing the estimate from Lemma \ref{lemma:linLp} we get
		\begin{align}
			\|\bS^{n/2}\|_p &\le \|\bS^{\btau}\|_p + \sum_{t = \btau + 1}^{n/2} \ip{g(\bS^{t-1})}{\tilde{\bell}^t} + \sum_{t = \btau + 1}^{n/2} \frac{(p-1) \|\tilde{\bell}^t\|^2_p}{2 \|\bS^{t-1}\|_p}. \label{eq:greedyMain}
		\end{align}
		We upper bound each term of the right-hand side separately.
		
		\paragraph{First term of RHS of \eqref{eq:greedyMain}.} Since $\tau$ is the first time $\|\bS^{\btau}\|_p$ goes above $\e \OPT$, and the load does not increase by more than $m^{1/p}$ per time step (which uses the fact that the entries of the matrices $\bA^t$ are in $[0,1]$), we have $\|\bS^{\btau}\|_p \le \|\bS^{\btau - 1}\|_p + \|\bA^{\btau} \bar{\bx}^{\btau}\|_p \le \e \OPT + m^{1/p}$. 
		
		\paragraph{Last term of RHS of \eqref{eq:greedyMain}.} First notice that since we are only adding terms after the stopping time $\btau$, each denominator will be at least $2 \e \OPT$; so we have the upper bound $\frac{(p-1)}{2\e\OPT} \sum_{t = \btau +1}^{n/2} \|\tilde{\bell}^t\|^2_p$. To bound this remaining sum, we will linearize it by passing to the $\ell_1$ norm and them back to $\ell_p$: Since all entries of the load vector $\tilde{\bell}$ are in $[0,1]$, we have that $\|\tilde{\bell}^t\|_p^2 \le \|\tilde{\bell}^t\|_1$:
		\begin{align*}
			\|\tilde{\bell}^t\|^2_p = \bigg(\sum_i \big(\tilde{\bell}^t_i\big)^p \bigg)^{2/p} \le  \bigg(\sum_i \big(\tilde{\bell}^t_i\big)^{p/2} \bigg)^{2/p} = \|\tilde{\bell}^t\|_{p/2} \le \|\tilde{\bell}^t\|_1.
		\end{align*}
		\short{}{\red{Missing comma at the end of eq.}where the last inequality uses standard comparison of norms.}
		Moreover, the non-negativity of these vectors give additivity for $\|.\|_1$, namely $\sum_{t \le n/2} \|\tilde{\bell}^t\|_1 = \|\sum_{t \le n/2} \tilde{\bell}^t\|_1$. Finally, by \short{}{standard} comparison of norms this is at most $m^{1-1/p} \|\sum_{t \le n/2} \tilde{\bell}^t\|_p$. Thus, the last term of \eqref{eq:greedyMain} can be upper bounded by $\frac{(p-1) m^{1-1/p}}{2 \e \OPT} \|\sum_{t \le n/2} \tilde{\bell}^t\|_p \le \frac{(p-1) m^{1-1/p}}{2 \e}$.
		
		\paragraph{Second term of RHS of \eqref{eq:greedyMain}.} This is the main term in the RHS of \eqref{eq:greedyMain}, and we need to show that \emph{in expectation} it is about at most $\OPT$; this is the only place we use the random-order model and that the algorithm restarts at $n/2$. Since $g(\bS^{t-1})$ only depends on items seen up to time $t-1$, we can employ Lemma \ref{lemma:breakCorr} to obtain that $\E\ip{g(\bS^{t-1})}{\tilde{\bell}^t} \le e^\e  \|\E \tilde{\bell}^t\|_p + \frac{1}{\e} \cdot \frac{p (m^{1/p} -1)}{n - (t-1)}$. Moreover, notice that \OPT's expected load $\E\tilde{\bell}^t$ is the same in every time step, and so $\|\E \tilde{\bell}^t\|_p = \frac{1}{n} \|\E \sum_{t = 1}^n \tilde{\bell}^t\|_p = \frac{1}{n} \OPT$.\mmnote{Add more explanation and move this somewhere else? Probably needed in the other section as well} Since we are only considering $t \le n/2$, our expression can be further bounded as $\E\ip{g(\bS^{t-1})}{\tilde{\bell}^t} \le  \frac{e^\e}{n} \OPT + \frac{2 p (m^{1/p} -1)}{\e n}$. Adding over all these time steps we get $\E \sum_{t = \btau + 1}^{n/2} \ip{g(\bS^{t-1})}{\tilde{\bell}} \le \frac{e^\e}{2} \OPT + \frac{p (m^{1/p} - 1)}{\e}$.
		
		\medskip Employing all these bound in inequality \eqref{eq:greedyMain} proves \eqref{eq:greedyHalf}. This concludes the proof of Theorem \ref{thm:greedy}.


	\section{Towards improved simultaneous guarantees: algorithm \smoothgreed} \label{sec:improved}
	
	We now present the algorithm \smoothgreed that has improved random-order guarantee at the expense of a slightly suboptimal worst-case guarantee.
	

		\begin{algorithm}[H]
		\caption{\smoothgreed$(p, \e)$}
		\begin{algorithmic}[0]
				\State Let $\psi_{\e,p}(u) = \frac{p}{\epsilon} \|\mathbf{1} + \frac{\epsilon u}{p}\|_p - \frac{p}{\epsilon}$ (as in equation \eqref{eq:smooth}).
				\For{time $t = 1, \ldots, \frac{n}{2}$}
					\State Select $\bar{x}^t \in \Delta^m$ to minimize $\psi_{\e,p}(\sum_{\tau = 1}^t A^\tau \bar{x}^\tau)$
				\EndFor
				\For{time $t = \frac{n}{2}+1, \ldots, n$}
					\State Select $\bar{x}^t \in \Delta^m$ to minimize $\psi_{\e,p}(\sum_{\tau = \frac{n}{2}+1}^t A^\tau \bar{x}^\tau)$
				\EndFor
    \end{algorithmic}
    \end{algorithm}

	The motivation behind this algorithm is the following: First, since this is simply \greedywr on the modified function $\psi_{\e,p}(.) = \frac{p}{\epsilon} \|\mathbf{1} + \frac{\epsilon u}{p}\|_p - \frac{p}{\epsilon}$, it is intuitive that it approximately inherits the worst-case guarantee of \greedywr. On the other hand, the smoothness of $\psi_{\e,p}$ (equation \eqref{eq:propSmooth2}), guarantees that its gradient captures well its behavior, so \smoothgreed is almost greedy on this gradient; this allow us to connect the algorithm with the \smoothbaseline OLO algorithm to provide guarantees in the random-order model. Here is the formal guarantees of \smoothgreed.
	
	\begin{theorem} \label{thm:smoothGreedy}
		For all $p \in [2, \infty)$ algorithm \smoothgreed has the following guarantees for any $\e > 0$
		\vspace{-5pt}
		\begin{enumerate}
			\item[(a)] In the worst-case model it has load at most $O(p) \OPT + \frac{4 p (m^{1/p} - 1)}{\e}$ 
			\vspace{-10pt}
			\item[(b)] In the random-order model has expected load at most $e^{2\e} (\OPT + \frac{4 p (m^{1/p} - 1)}{\e})$.
		\end{enumerate}
	\end{theorem}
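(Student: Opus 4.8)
The plan is to prove the two parts through the two complementary views of \smoothgreed: part~(a) from the ``greedy'' view (it is literally an $\ell_p$-greedy algorithm on a rescaled instance), and part~(b) from the ``approximately primal--dual''/OLO view, with Lemma~\ref{lemma:breakCorr} carrying the random-order analysis. Starting with \textbf{part~(b)}: because \smoothgreed restarts at time $n/2$ and jobs arrive in random order, the partial loads $\sum_{t\le n/2}\bA^t\bar\bx^t$ and $\sum_{t>n/2}\bA^t\bar\bx^t$ are identically distributed, so by the triangle inequality it suffices to bound $\E\|\bS^{n/2}\|_p$ where $\bS^t:=\sum_{\tau\le t}\bA^\tau\bar\bx^\tau$. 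Since $\|u\|_p\le\psi_{\e,p}(u)$ (Lemma~\ref{lemma:smoothened}), I would instead bound $\E\psi_{\e,p}(\bS^{n/2})$, telescoping it as $\psi_{\e,p}(0)+\sum_{t\le n/2}\big(\psi_{\e,p}(\bS^t)-\psi_{\e,p}(\bS^{t-1})\big)$. For a single step, the greedy choice of \smoothgreed gives $\psi_{\e,p}(\bS^t)\le\psi_{\e,p}(\bS^{t-1}+\tilde\bell^t)$, where $\tilde\bell^t=\bA^t\tilde\bx^t$ is \OPT's load at time $t$ (note $\tilde\bx^t$ is a feasible choice for the minimization); convexity of $\psi_{\e,p}$ gives $\psi_{\e,p}(\bS^{t-1}+\tilde\bell^t)-\psi_{\e,p}(\bS^{t-1})\le\ip{\nabla\psi_{\e,p}(\bS^{t-1}+\tilde\bell^t)}{\tilde\bell^t}$, and the stability property~\eqref{eq:propSmooth2} (applicable since $\tilde\bell^t\in[0,1]^m$, and $\psi_{\e,p}$ is differentiable on $\R^m_+$) upgrades this to $\psi_{\e,p}(\bS^t)-\psi_{\e,p}(\bS^{t-1})\le e^\e\ip{\nabla\psi_{\e,p}(\bS^{t-1})}{\tilde\bell^t}$.

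The key point is that $\nabla\psi_{\e,p}(\bS^{t-1})\in\ell_q^+$ (from the gradient expression, as in the proof of Theorem~\ref{thm:OLO}) and depends only on the first $t-1$ arriving jobs, so, exactly as in Section~\ref{sec:greedy}, Lemma~\ref{lemma:breakCorr} applies with $\bZ=\nabla\psi_{\e,p}(\bS^{t-1})$ and the $\tilde\bell^t$'s as the without-replacement sample from $\{A^j\tilde x^j\}_j\subseteq[0,1]^m$, yielding $\E\ip{\nabla\psi_{\e,p}(\bS^{t-1})}{\tilde\bell^t}\le e^\e\|\E\tilde\bell^t\|_p+\frac{1}{n-(t-1)}\cdot\frac{p(m^{1/p}-1)}{\e}$. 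Since $\E\tilde\bell^t$ is the same for every $t$, $\|\E\tilde\bell^t\|_p=\frac1n\|\sum_t\tilde\bell^t\|_p=\frac{\OPT}{n}$. Summing over $t\le n/2$ and using $\psi_{\e,p}(0)=\frac{p(m^{1/p}-1)}{\e}$, the $\OPT$ terms contribute $\frac{e^{2\e}}2\OPT$ and the additive terms contribute $e^\e\frac{p(m^{1/p}-1)}{\e}\sum_{t=1}^{n/2}\frac1{n-(t-1)}\le e^\e\frac{p(m^{1/p}-1)}{\e}$ (the harmonic tail $H_n-H_{n/2}<\ln 2<1$). Hence $\E\|\bS^{n/2}\|_p\le\frac{e^{2\e}}2\OPT+(1+e^\e)\frac{p(m^{1/p}-1)}{\e}$; doubling for the two halves and bounding $2(1+e^\e)\le 4e^{2\e}$ gives exactly $e^{2\e}\big(\OPT+\frac{4p(m^{1/p}-1)}{\e}\big)$.

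For \textbf{part~(a)} I would use that $\psi_{\e,p}(u)=\frac{p}{\e}\|\ones+\frac{\e u}{p}\|_p-\frac{p}{\e}$, so minimizing $\psi_{\e,p}(\bS^t)$ at each step is precisely running the classical $\ell_p$-greedy algorithm on the rescaled jobs $\{\frac{\e}{p}A^t\}$ starting from the uniform load $\ones$ (equivalently, with starting potential $\psi_{\e,p}(0)=\frac{p(m^{1/p}-1)}{\e}$). The rescaled instance has offline optimum at most $\frac{\e}{p}\OPT$, and the head-start is uniform with $\|\ones\|_p=m^{1/p}$; re-running the $O(p)$-competitive potential argument of~\cite{awerbuch} with this head-start should give $\|\ones+\frac{\e}{p}\bS^{n/2}\|_p\le m^{1/p}+O(\e)\OPT$, and translating back through $\|\bS^{n/2}\|_p\le\psi_{\e,p}(\bS^{n/2})=\frac{p}{\e}(\|\ones+\frac{\e}{p}\bS^{n/2}\|_p-1)$ together with the triangle inequality over the two halves yields load $O(p)\OPT+\frac{4p(m^{1/p}-1)}{\e}$.

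The step I expect to be the main obstacle is part~(a): one must verify that a \emph{uniform} head-start load feeds through the classical $\ell_p$-greedy potential analysis as a purely \emph{additive} $O(\frac{p(m^{1/p}-1)}{\e})$ term rather than inflating the multiplicative $O(p)$ factor (a naive application only gives a multiplicative blow-up of the head-start). This ``spurious'' additive term is exactly the defect that motivates combining \smoothgreed with the plain greedy algorithm in the next section to obtain \ultimate. Part~(b), by contrast, is essentially forced once the smoothness property~\eqref{eq:propSmooth2} and Lemma~\ref{lemma:breakCorr} are in hand; the only points needing care are checking that $\nabla\psi_{\e,p}(\bS^{t-1})$ genuinely depends only on the first $t-1$ jobs (so Lemma~\ref{lemma:breakCorr} applies) and the routine bookkeeping of the $e^{\pm\e}$ factors.
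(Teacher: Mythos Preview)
Your part~(b) is correct and is essentially the paper's argument with the two black boxes inlined. The paper first invokes the OLO regret (Theorem~\ref{thm:OLO}) on each half to get $\|\bS^n\|_p\le e^\e\big(\sum_t\ip{\bA^t\bar\bx^t}{\bg^t}+2R\big)$, then uses Lemma~\ref{lemma:greedyGrad} to replace $\bar\bx^t$ by $\tilde\bx^t$ at cost $e^{2\e}$, and finally applies Lemma~\ref{lemma:breakCorr}. Your telescope of $\psi_{\e,p}$ followed by greedy~$+$~convexity~$+$~stability is exactly the same chain of inequalities, just not packaged as ``OLO regret'' and ``approximate gradient greedy''; the ingredients and constants coincide.

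For part~(a) you have the right reduction (\smoothgreed on each half is literally $\ell_p$-\greedy on the instance padded with $w=p/\e$ all-ones jobs) and you correctly flag the obstacle, but you do not resolve it, and ``re-running the $O(p)$ potential argument of~\cite{awerbuch}'' as stated would only give the naive bound $\|\tfrac{p}{\e}\ones+S^{n/2}\|_p\le O(p)\big(\OPT+\tfrac{p}{\e}m^{1/p}\big)$, i.e.\ an additive term \emph{quadratic} in $p$ (the paper explicitly points this out). The paper's fix is the refined greedy guarantee of Lemma~\ref{lemma:refinedGuarantee} (extracted from~\cite{caragiannis}):
\[
\Big\|\sum_t C^t\bar x^t\Big\|_p \;\le\; 2^{1/p}\Big\|\sum_{t<\tau} C^t\bar x^t\Big\|_p \;+\; O(p)\Big\|\sum_{t\ge\tau} C^t\tilde x^t\Big\|_p,
\]
so the head-start $\|\tfrac{p}{\e}\ones\|_p=\tfrac{p}{\e}m^{1/p}$ is multiplied only by $2^{1/p}$, not by $O(p)$. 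Subtracting $\tfrac{p}{\e}$ and using subadditivity of $x\mapsto x^{1/p}-1$ on $[0,\infty)$ (so $(2m)^{1/p}-1\le 2(m^{1/p}-1)$ for $m\ge 2$) then gives $\psi_{\e,p}(S^{n/2})\le O(p)\,\OPT+\tfrac{2p(m^{1/p}-1)}{\e}$, linear in $p$ as required. Your claimed inequality $\|\ones+\tfrac{\e}{p}S^{n/2}\|_p\le m^{1/p}+O(\e)\OPT$ is in fact slightly stronger than what Lemma~\ref{lemma:refinedGuarantee} yields (it has $2^{1/p}m^{1/p}$ in place of $m^{1/p}$), and would need its own justification.
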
	


	\short{\paragraph{Analysis in the worst-case model.}}{\subsection{Analysis in the worst-case model}}
	We prove part (a) of Theorem \ref{thm:smoothGreedy}, where the idea is connect with \greedywr by adding extra jobs to the input. 	
	
	Assume the number of machines $m \ge 2$, otherwise 	 it is easy to see that \smoothgreed is optimal.	Consider an arbitrary input sequence $A^1, \ldots, A^n$ for \smoothgreed. Define $B^1 =  \ldots =  B^w$ to be the all 1's $m \times k$ matrix, for $w = \frac{p}{\e}$ (we assume for simplicity that $w$ is integral). Since $\psi_{\e,p}(u) = \|\frac{p}{\e} \ones + u\|_p - \frac{p}{\e} = \|(B^1 y^1 + \ldots + B^w y^w) + u\|_p - \frac{p}{\e}$ for any $y^t$'s, it is easy to see that behavior of \smoothgreed up to time $n/2$ is the same as that of \greedy (without restart, over norm $\|.\|_p$) on input $B^1, \ldots, B^w, A^1, \ldots, A^{\frac{n}{2}}$, and the same for time periods $n/2, \ldots, n$. Now one can directly employ the standard $O(p)$-guarantee for \greedy to get that \smoothgreed has load at most $O(p) (\OPT + \frac{p (m^{1/p} - 1)}{\e})$; however this leads to an additive error that is \textbf{quadratic} in $p$. To obtain an improved bound, we use the following more refined guarantee for \greedy, which can be obtained from the analysis in \cite{caragiannis} (we present a proof in Appendix \ref{app:refinedGuarantee}).
	
	\begin{lemma} \label{lemma:refinedGuarantee}
		Consider an arbitrary sequence of jobs $C^1, \ldots, C^n$, let $\{\bar{x}^t\}_t$ be the actions output by \greedy over $\|.\|_p$ and let $\{\tilde{x}^t\}_t$ be the optimal solution. Then for all $\tau$
		\begin{align*}
			\bigg\|\sum_{t} C^{t} \bar{x}^{t}\bigg\|_p - 2^{1/p} \bigg\|\sum_{t = 1}^{\tau-1} C^{t} \bar{x}^{t}\bigg\|_p \le O(p) \cdot \bigg\|\sum_{t = \tau}^n C^{t} \tilde{x}^{t}\bigg\|_p.
		\end{align*}
	\end{lemma}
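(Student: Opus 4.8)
The plan is to revisit the standard $O(p)$-competitiveness argument for \greedy (as in Awerbuch et al.~\cite{awerbuch} and the refinement in~\cite{caragiannis}) and track carefully what happens when we only ``charge'' the jobs arriving from time $\tau$ onward against the optimal solution. Write $S^t = \sum_{t' \le t} C^{t'} \bar{x}^{t'}$ for the greedy load after step $t$, and let $\bell^t = C^t \bar{x}^t$ be the greedy increment. The key local inequality from the greedy analysis is that, because \greedy picked the option minimizing $\|S^{t-1} + C^t x\|_p$, in particular $\|S^t\|_p^p = \|S^{t-1} + \bell^t\|_p^p \le \|S^{t-1} + \tilde{\bell}^t\|_p^p$ where $\tilde{\bell}^t = C^t \tilde{x}^t$ is \OPT's increment; expanding and using convexity (or the standard estimate $(a+b)^p \le a^p + p(a+b)^{p-1}b$ coordinatewise, together with the fact that entries lie in $[0,1]$) yields a bound of the form $\|S^t\|_p^p - \|S^{t-1}\|_p^p \le p \sum_i (S^{t-1}_i + \tilde{\bell}^t_i)^{p-1}\tilde{\bell}^t_i$, i.e. roughly $p\,\ip{\nabla\|S^{t}\|_p}{\tilde{\bell}^t}\cdot\|S^t\|_p^{p-1}$ up to the $\pm 1$ slack.

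The plan is then to sum this inequality over $t = \tau, \ldots, n$ only, so that the telescoped left-hand side becomes $\|S^n\|_p^p - \|S^{\tau-1}\|_p^p$. On the right-hand side we use \holder's inequality (with exponents $p$ and $q$) to pull out a common factor: $\sum_{t \ge \tau}\ip{g^t}{\tilde{\bell}^t} \le \|\text{(monotone upper bound on the }g^t\text{'s)}\|_q \cdot \|\sum_{t\ge\tau}\tilde{\bell}^t\|_p$, using non-negativity to get additivity of the $\tilde{\bell}^t$'s inside the $\ell_p$-norm, plus the fact that the relevant ``gradient'' vectors are dominated by $\nabla\|S^n\|_p$ (this is exactly where monotonicity of the greedy load and the $\pm 1$ correction enter, and is the source of the $2^{1/p}$ factor multiplying $\|S^{\tau-1}\|_p$ — we lose a factor of $2$ at the $p$-th power level when we replace $S^{t-1}$ by $S^n$ and absorb the $S^{\tau-1}$ ``head''). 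This gives an inequality of the shape $\|S^n\|_p^p \le 2\|S^{\tau-1}\|_p^p + O(p)\,\|S^n\|_p^{p-1}\cdot\|\sum_{t\ge\tau}\tilde{\bell}^t\|_p$.

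Finally, set $L = \|S^n\|_p$, $H = \|S^{\tau-1}\|_p$, $O' = \|\sum_{t\ge\tau}\tilde{\bell}^t\|_p = O(p)$-free quantity, so the inequality reads $L^p \le 2 H^p + c\,p\, L^{p-1} O'$. One then argues: either $L \le 2^{1/p} H + $ (small), in which case the claim $L - 2^{1/p}H \le O(p) O'$ holds trivially; or $L^p \ge 2 H^p$, in which case $L^p - 2H^p \ge \tfrac12 L^p$ is not quite what we want, so instead we use $L^p - 2H^p = (L - 2^{1/p}H)(L^{p-1} + \cdots)\ge (L-2^{1/p}H)\,L^{p-1}$ (valid since $L \ge 2^{1/p}H \ge 0$ and the remaining terms of the factorization are non-negative), divide through by $L^{p-1}$, and conclude $L - 2^{1/p}H \le c\,p\,O' = O(p)\|\sum_{t\ge\tau}C^t\tilde{x}^t\|_p$. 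The main obstacle is getting the telescoping/\holder step clean: one must make precise which monotone vector dominates all the per-step gradients $g(S^{t-1})$ and verify its $\ell_q$-norm is at most $\approx \|S^n\|_p^{p-1}/\|S^n\|_p^{p-1} = 1$ after normalization, while honestly accounting for the $\pm 1$ boundedness slack that produces the constant in the $O(p)$ term; this is the part that~\cite{caragiannis} does carefully and that I would reproduce in Appendix~\ref{app:refinedGuarantee}.
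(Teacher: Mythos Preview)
Your overall frame (sum the per-step greedy inequalities over $t \ge \tau$, then split into the cases $\|S^n\|_p \le 2^{1/p}\|S^{\tau-1}\|_p$ versus not) matches the paper's. The gap is in the middle step, where you linearize and then apply \holder against a ``dominating gradient.'' After the first-order expansion you have
\[
\|S^n\|_p^p - \|S^{\tau-1}\|_p^p \;\le\; p\sum_{t\ge\tau}\sum_i (S^{t-1}_i + \tilde{\ell}^t_i)^{p-1}\tilde{\ell}^t_i,
\]
and you want a common pointwise majorant for the vectors $\big((S^{t-1}_i + \tilde{\ell}^t_i)^{p-1}\big)_i$. But since $\tilde{\ell}^t$ is the \emph{optimal} increment (not the greedy one), it is not bounded by $S^n_i - S^{t-1}_i$; the best majorant you get is $(S^n_i + O^*_i)^{p-1}$ with $O^* = \sum_{t\ge\tau}\tilde{\ell}^t$, whose $\ell_q$-norm is $(L+O')^{p-1}$, not $L^{p-1}$. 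With that, your factor-and-divide step yields only $L - 2^{1/p}H \le c\,p\,(1+O'/L)^{p-1}\,O'$, which is $\Theta(p\,2^p)\,O'$ when $O' \approx L$. The remark that the ``$\pm 1$ slack'' and ``absorbing the $S^{\tau-1}$ head'' produce the factor of $2$ on $H^p$ does not correspond to any actual manipulation; in the paper the $2^{1/p}$ arises purely from the case split, not from the gradient step.

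The paper avoids linearizing altogether. It keeps the full increment $\|S^{t-1}+\tilde{\ell}^t\|_p^p - \|S^{t-1}\|_p^p$, uses coordinatewise monotonicity of $x\mapsto (x+a)^p - x^p$ to replace $S^{t-1}$ by $S^n$, and then uses \emph{superadditivity} of $b \mapsto \|S^n+b\|_p^p - \|S^n\|_p^p$ (this is Lemma~3.1 of~\cite{caragiannis}) to collapse the sum over $t$ into the single term $\|S^n + \sum_{t\ge\tau}\tilde{\ell}^t\|_p^p - \|S^n\|_p^p \le (L+O')^p - L^p$. The endgame is also different from yours: in the nontrivial case $L^p > 2H^p$ one obtains $\tfrac{3}{2}L^p \le (L+O')^p$, takes $p$-th roots, and uses $(3/2)^{1/p} - 1 \ge \ln(3/2)/p$ to conclude $L \le \frac{p}{\ln(3/2)}\,O'$ directly --- no factorization of $L^p - 2H^p$ is needed.
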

	
	Then let $S^t$ be the total load vector obtained by \smoothgreed up to time $t$. Using triangle inequality and then Lemma \ref{lemma:smoothened} we decompose the load of the algorithm \short{$\|S^n\|_p \le \|S^{\frac{n}{2}}\|_p + \|S^n - S^{\frac{n}{2}}\|_p \le \psi_{\e,p}(S^{\frac{n}{2}}) + \psi_{\e,p}(S^n - S^{\frac{n}{2}})$.}{
	have that the load of the algorithm can be decomposed
	\begin{align}
		\|S^n\|_p &\le \|S^{\frac{n}{2}}\|_p + \|S^n - S^{\frac{n}{2}}\|_p \le \psi_{\e,p}(S^{\frac{n}{2}}) + \psi_{\e,p}(S^n - S^{\frac{n}{2}}). \label{eq:wcSmoothGreedy}
	\end{align}
	}
	To upper bound the term $\psi_{\e,p}(S^{\frac{n}{2}})$, we apply Lemma \ref{lemma:refinedGuarantee} with $\{B^t\}_{t}$ corresponding to $\{C^{t}\}_{t < \tau}$ and $\{A^t\}_t$ corresponding to $\{C^t\}_{t \ge \tau}$ to get
	\begin{align*}
		\psi_{\e,p}(S^{\frac{n}{2}}) = \left\|\frac{p}{\e} \ones + S^{\frac{n}{2}}\right\|_p - \frac{p}{\e} \le O(p) \, \OPT_{1} + 2^{1/p} \left\|\frac{p}{\e} \ones\right\|_p - \frac{p}{\e} \le O(p) \, \OPT + \frac{p ((2 m)^{1/p} - 1)}{\e},
	\end{align*}
	where $\OPT_1$ is the optimal load up to time $n/2$. Moreover, the function $x \mapsto x^{1/p} - 1$ is subadditive over $[0, \infty)$ (since it is non-negative, concave and has value 0 at the origin~\cite[Theorem 7.2.5]{hillePhillips}), and hence the last term of the right-hand side is at most $\frac{p}{\e}((m^{1/p} - 1) + (m^{1/p} - 1)) = \frac{2p (m^{1/p} - 1)}{\e}$. Similarly, for the second half \short{}{of the instance} we have $\psi_{\e,p}(S^n - S^{\frac{n}{2}}) \le O(p)\, \OPT + \frac{2p (m^{1/p} - 1)}{\e}$. Employing these bounds \short{proves}{on inequality \eqref{eq:wcSmoothGreedy} then proves} part (a) of Theorem~\ref{thm:smoothGreedy}.
	
   

 		\short{\paragraph{Analysis in the random-order model.}}{ \subsection{Analysis in the random-order model}}
    Now we analyze algorithm \smoothgreed in the random-order model, proving part (b) of Theorem \ref{thm:smoothGreedy}. As usual, let $\bA^1, \ldots, \bA^n$ be the sequence of jobs in random order, $\{\bar{\bx}^t\}_t$ be the decisions output by \smoothgreed, and $\bS^t = \sum_{\tau\le t} \bA^{\tau} \bar{\bx}^{\tau}$ be the load vector up to time $t$. 

		\mmnote{Add this discussion (over the next 3 par) to intro}The first main idea for the analysis is that algorithm \smoothgreed is also ``approximately greedy'' with respect to the gradient $\nabla \psi_{\e,p}$. This is due to the smoothness property \eqref{eq:propSmooth2}, and a main reason for defining \smoothgreed as greedy over the smoothened function $\psi_{\e,p}$ instead of the original one $\|.\|_p$; this lemma follows directly by integrating property \eqref{eq:propSmooth2} (see Appendix \ref{app:greedyGrad}).
		
    
    \begin{lemma} \label{lemma:greedyGrad}
    	For \short{}{any }$u \in \R^m_+$ and $v,v' \in [0,1]^m$, if $\psi_{\e,p}(u + v) \le \psi_{\e, p}(u + v')$ then $\ip{\nabla \psi_{\e,p}(u)}{v} \le e^{2\e} \ip{\nabla \psi_{\e,p}(u)}{v'}$.
    \end{lemma}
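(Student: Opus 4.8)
The plan is to derive the statement from the convexity and $C^1$-smoothness of $\psi_{\e,p}$ on $\R^m_+$ together with the pointwise gradient-stability property \eqref{eq:propSmooth2}. First I would note that $\psi_{\e,p}$ is continuously differentiable on a neighbourhood of the segment $[u, u+v]$ (indeed $1 + \frac{\e w}{p} > 0$ for every $w \ge 0$), so the fundamental theorem of calculus along this segment gives
\begin{align*}
  \psi_{\e,p}(u+v) - \psi_{\e,p}(u) = \int_0^1 \ip{\nabla \psi_{\e,p}(u + x v)}{v}\, dx ,
\end{align*}
and the analogous identity with $v'$ in place of $v$.

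The next step is to control the integrand using \eqref{eq:propSmooth2}. For every $x \in [0,1]$ the perturbation $x v$ lies in $[0,1]^m$ (since $v \in [0,1]^m$), so \eqref{eq:propSmooth2} applied at base point $u$ yields $\nabla \psi_{\e,p}(u + x v) = e^{\pm \e}\, \nabla \psi_{\e,p}(u)$ pointwise. Because both $\nabla \psi_{\e,p}(u) \ge 0$ (it belongs to $\ell_q^+$, as used in the proof of Theorem \ref{thm:OLO}) and $v \ge 0$, this componentwise two-sided bound integrates to
\begin{align*}
  e^{-\e}\, \ip{\nabla \psi_{\e,p}(u)}{v} \le \psi_{\e,p}(u+v) - \psi_{\e,p}(u) \le e^{\e}\, \ip{\nabla \psi_{\e,p}(u)}{v} ,
\end{align*}
and likewise $e^{-\e}\, \ip{\nabla \psi_{\e,p}(u)}{v'} \le \psi_{\e,p}(u+v') - \psi_{\e,p}(u) \le e^{\e}\, \ip{\nabla \psi_{\e,p}(u)}{v'}$.

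Finally I would chain these estimates: subtracting $\psi_{\e,p}(u)$ from the hypothesis $\psi_{\e,p}(u+v) \le \psi_{\e,p}(u+v')$ and combining the two displays,
\begin{align*}
  e^{-\e}\, \ip{\nabla \psi_{\e,p}(u)}{v} \le \psi_{\e,p}(u+v) - \psi_{\e,p}(u) \le \psi_{\e,p}(u+v') - \psi_{\e,p}(u) \le e^{\e}\, \ip{\nabla \psi_{\e,p}(u)}{v'} ,
\end{align*}
so multiplying through by $e^{\e}$ gives $\ip{\nabla \psi_{\e,p}(u)}{v} \le e^{2\e}\, \ip{\nabla \psi_{\e,p}(u)}{v'}$, as desired. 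The only point requiring care is the non-negativity of both $\nabla \psi_{\e,p}(u)$ and the perturbation vectors, which is exactly what lets the componentwise multiplicative bound of \eqref{eq:propSmooth2} be promoted to a bound on the inner products — and this is why the hypotheses restrict to $u \in \R^m_+$ and $v, v' \in [0,1]^m$. There is no real obstacle: the lemma is essentially a one-line consequence of integrating the stability property, and the only thing to watch is the direction of the inequalities.
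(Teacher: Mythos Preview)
Your proposal is correct and follows essentially the same route as the paper's proof: integrate the gradient along the segment $[u,u+v]$, use the stability property \eqref{eq:propSmooth2} to sandwich $\psi_{\e,p}(u+v)-\psi_{\e,p}(u)$ between $e^{-\e}\ip{\nabla\psi_{\e,p}(u)}{v}$ and $e^{\e}\ip{\nabla\psi_{\e,p}(u)}{v}$, and then chain with the hypothesis. Your version is simply a bit more explicit about why non-negativity of the gradient and the perturbations is needed to pass from the componentwise bound to the inner-product bound.
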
 
    
    Because of that, forgetting about the restart for now, algorithm \smoothgreed can be seen as an approximation to the primal-dual-type algorithms of \cite{guptaMolinaro,agrawalDevanur}: Considering the expression $\sum_t \ip{\nabla \psi_{\e,p}(\bS^{t-1})}{\bA^t \bar{\bx}^t}$, on the primal view the algorithm is choosing $\bar{\bx}^t$ to approximately minimize this expression online, and on the dual view the gradient $\nabla \psi_{\e,p}(\bS^{t-1})$ is playing the role of dual variables trying to maximize this expression. 
    
    The second crucial point of using $\psi_{\e,p}$ is that these dual variables $\nabla \psi_{\e,p}(\bS^{t-1})$ are exactly being played according to algorithm \smoothbaseline (over input $\bA^1, \ldots, \bA^n$), which we showed in Theorem \ref{thm:OLO} has small regret. This means that the sum $\sum_t \ip{\nabla \psi_{\e,p}(\bS^{t-1})}{\bA^t \bar{\bx}^t}$ is approximately capturing the actual load $\|\bS^n\|_p$. From this point on, the analysis follows the same lines as that of \cite{guptaMolinaro,agrawalDevanur}.
    
    To make this more formal, let $\bg^t$ be the gradient used by algorithm \smoothgreed at time $t$, namely 
    $\bg^t = \nabla \psi_{\e,p}(\bS^{t-1})$
     for $t \le n/2$ and 
 $\bg^t = \nabla \psi_{\e,p}(\bS^{t-1} - \bS^{\frac{n}{2}})$
 for $t > \frac{n}{2}$, and let $R := \frac{p (m^{1/p} - 1)}{\e}$\mmnote{Check constants later} be the additive regret in Theorem \ref{thm:OLO}. Using Theorem \ref{thm:OLO} in the two halves of  algorithm \smoothgreed in every scenario, we have that 
  $\sum_{t} \ip{\bA^t \bar{\bx}^t}{\bg^t} \ge e^{-\e} (\|\bS^{\frac{n}{2}}\|_p + \|\bS^n - \bS^{\frac{n}{2}}\|_p) - 2R \ge e^{-\e} \|\bS^n\|_p -2R$, where the last inequality follows from triangle inequality; thus, the load incurred by the algorithm satisfies
     \begin{align}
     	 \|\bS^n\|_p \le e^\e \left(\sum_t \ip{\bA^t \bar{\bx}^t}{\bg^t} + 2R\right). \label{eq:RO1}
     \end{align}
       Now we upper bound the right-hand side in expectation.
       Because of the restart of the algorithm, and the random order, the distribution of the first half $(\ip{\bA^t \bar{\bx}^t}{\bg^t})_{t=1}^{\frac{n}{2}}$ is the same as that of the second half $(\ip{\bA^t \bar{\bx}^t}{\bg^t})_{t=\frac{n}{2}+1}^{n}$, and so it suffices to bound the first half sum $\E \sum_{t\le\frac{n}{2}} \ip{\bA^t \bar{\bx}^t}{\bg^t}$.

 Let $\tilde{\bx}^1, \ldots, \tilde{\bx}^n$ be the optimal offline solution. By the greedy criterion of \smoothgreed and Lemma \ref{lemma:greedyGrad}, the algorithm has almost better load than the optimal solution when measured though the $\bg^t$'s: $\sum_t \ip{\bA^t \bar{\bx}^t}{\bg^t} \le e^{2\e} \sum_t \ip{\bA^t \tilde{\bx}^t}{\bg^t}$. But since the gradient $\bg^t$ is determined by $\bS^{t-1}$ and belongs to $\ell_q^+$, Lemma \ref{lemma:breakCorr} gives the upper bound $\E \ip{\bA^t \tilde{\bx}^t}{\bg^t} \le e^\e \|\E \bA^t \tilde{\bx}^t\|_p + \frac{1}{\e} \cdot \frac{p (m^{1/p} - 1)}{n - (t-1)}$; also notice that $\|\E\bA^t \tilde{\bx}^t\|_p = \frac{1}{n} \OPT$. Adding these bounds over all $t \le \frac{n}{2}$, we obtain 
	\vspace{-5pt}
	\begin{align*}
		\E \sum_{t \le \frac{n}{2}} \ip{\bA^t \bar{\bx}^t}{\bg^t} \le e^{\e} \left( \frac{\OPT}{2} + \frac{p (m^{1/p} - 1)}{\e} \right).
	\end{align*}
  Plugging this bound in inequality \eqref{eq:RO1} we get that the expected load of the algorithm is $\E\|\bS^n\|_p \le e^{2\e} \left(\OPT + 4R\right)$. This concludes the proof of part (b) of Theorem \ref{thm:smoothGreedy}.


	\section{Algorithm \ultimate} \label{sec:ultimate}
	
	Since algorithm \smoothgreed incurs an additive error in the worst-case, if $\OPT$ is small it may not give the desired $O(p)$ multiplicative guarantee. Thus, the idea is to use the regular greedy algorithm until the accumulated load is large enough, and then switch to \smoothgreed.
	

		\begin{algorithm}[H]
		\caption{\ultimate$(p, \e)$}
		\begin{algorithmic}[0]
			\State 1. Run algorithm \greedy, obtaining solution $\bar{x}^1, \ldots, \bar{x}^{\bar{t}}$, until the first time $\bar{t}$ the load $\|\sum_{t \le \bar{t}} A^t \bar{x}^t\|_p$ obtained becomes larger than $\frac{p (m^{1/p} -1)}{\e}$.
    	\State 2. Run algorithm \smoothgreed over the remaining $n-\bar{t}$ time steps, obtaining $\bar{x}^{\bar{t} + 1}, \ldots, \bar{x}^{n}$.
    \end{algorithmic}
    \end{algorithm}

	The guarantees of the algorithm are given in Theorem \ref{thm:ultimate}, which we now prove. By triangle inequality the load of the algorithm is at most $$\textstyle \|\sum_{t < \bar{t}} A^t \bar{x}^t\|_p + \|A^t \bar{x}^t\|_p + \|\sum_{t > \bar{t}} A^t \bar{x}^t\|_p ~\le~ \frac{p(m^{1/p} -1)}{\e} + m^{1/p} + \|\sum_{t > \bar{t}} A^t \bar{x}^t\|_p \le \frac{2 p(m^{1/p} -1)}{\e} + \|\sum_{t > \bar{t}} A^t \bar{x}^t\|_p ,$$ where the last inequality uses $p \ge 2$ and $\e \le 1$.
	
	In the random-order model, notice that when we condition on the stopping time $\bar{t}$ and the items seen thus far, the items in the remaining times $\bar{t} +1, \ldots, n$ are a random permutation of the remaining items, so we can \short{}{simply} apply the guarantee of \smoothgreed to bound the last term of the displayed inequality, giving part (b) of the theorem.
		
	In the worst-case model, we can also apply the guarantee of \smoothgreed to bound this term, and further note that $\frac{m^{1/p} - 1}{\e} \le O(\OPT)$: since the algorithm incurs load more than $\frac{p (m^{1/p} - 1)}{\e}$ up to time $\bar{t}$ and \greedy is $O(p)$-approximate, the optimal load up to time $\bar{t}$ is at least $\Omega(\frac{m^{1/p} - 1}{\e})$, which lower bounds $\OPT$. This gives part (a) of the theorem, and concludes the analysis of \ultimate.


    %
%


	\bibliographystyle{alpha}
	\bibliography{online-lp-short}

	\pagebreak
	\appendix
	\noindent \textbf{\LARGE Appendix}
	\\
	

	\section{Proof of Theorem \ref{thm:LBWC}} \label{app:LBWC}

	The lower bound is essentially the same used in \cite{awerbuch}, we just need to work with the \emph{expected load} incurred.	Fix a positive integer $M$, let $m = 2^{p + 1}$ (for simplicity we assume $p$ integral\mmnote{Maybe ``due to [comparison of norm]'', if have it}) and fix an algorithm. We will construct an instance where each job can be processed in a subset of the machines and incurs load 1 in the machine chosen to process it. So given a subset $S \subseteq [m]$ of machines, a \emph{type $S$} job is one that can be processed by the machines in $S$, again incurring load 1 in the chosen machine. 
	
	 The instance is constructed in $\log m$ rounds. In round $i$, we have machines $\U_{i-1} \subseteq [m]$ ``active''. These active machines are paired up into $|\U_{i-1}|/2$ disjoint pairs, and the adversary sends $M$ copies of jobs of types corresponding to each such pair $\{a,b\}$. Let $\bell^i$ be the (randomized) load vector incurred by the algorithm when processing these round-$i$ jobs. Then for each such pair $\{a,b\}$, the machine $j \in (a,b)$ with smallest load $\E \bell^i_j$ (ties broken arbitrarily) is deactivated, defining the next active set $\U_{i}$. Notice that for all  machines $j \in \U_{i}$ we have $\E \bell^i_j \ge \frac{M}{2}$. This proceeds until round $\log m$.

	
	We analyze this instance starting with $\OPT$. Consider the following strategy: process all round-$i$ by spreading them uniformly over the machines $\U_{i-1} \setminus \U_i$, incurring load $M$ in each of them. By construction, the machines used in each round by this strategy are disjoint, and hence the final load vector is $(M, M, \ldots, M)$, with load $m^{1/p} M$; this provides an upper bound on $\OPT$.
	
	On the other hand, the algorithm considered has expected load $\E\|\sum_i \bell^i\|_p \ge \|\E \sum_i \bell^i\|_p$ (using Jensen's inequality); to simplify the notation, let $\mu = \E \sum_i \bell^i$. So taking $p$-th powers
	\begin{align}
		\bigg(\E \bigg\|\sum_{i = 1}^{\log m } \bell^i\bigg\|_p\bigg)^p \ge	\sum_j \mu_j^p \ge \sum_{i = 1}^{\log m - 1} \sum_{j \in \U_i \setminus \U_{i+1}} \mu_j^p \label{eq:LBWC}
	\end{align}
	By construction, $\U_i \setminus \U_{i+1}$ has $\frac{m}{2^i} - \frac{m}{2^{i+1}} = \frac{m}{2^{i+1}}$ machines. Moreover, for each such machine $j$, the algorithm incurs expected load at least $\frac{M}{2}$ in each of the rounds from $1, \ldots, i$, and hence $\mu_j \ge \frac{M i}{2}$. Thus the right-hand side of \eqref{eq:LBWC} is at least $$\sum_{i = 1}^{\log m - 1} \frac{m}{2^{i+1}} \left(\frac{Mi}{2}\right)^p \ge m \left(\frac{M p}{2^{2+1/p}} \right)^p,$$ where the inequality is obtained by just using the term $i = \log m - 1 = p$. Thus, the load incurred by the algorithm is at least $m^{1/p} \frac{Mp}{2^{2 + 1/p}}$, thus proving the theorem.


	\section{Proof of Theorem \ref{thm:lb}} \label{app:lb}

Let $m = 2^p$. Our instance is based on the following Walsh system. For $i = 1, \ldots, \log m$, define the vectors $v^i \in \{0,1\}^m$ as follows: construct the $m \times \log m$ 0/1 matrix $M$ by letting its rows be all the $\log m$-bit strings; then the $v^i$'s are defined as the \textbf{columns} of this matrix. We use $(v^i)^c$ to denote 0/1 vector obtained 
by flipping all the coordinates of $v^i$.

	The main motivation for this construction is the following intersection property.
	
	\begin{lemma} \label{lemma:intersection}
		Consider a subset $I \subseteq [\log m]$ and for each $i \in I$ let $u^i$ be either $v^i$ or $(v^i)^c$. Then the $u^i$'s intersect in $\frac{m}{2^{|I|}}$ coordinates, namely there is set of coordinates $J \subseteq [m]$ of size $\frac{m}{2^{|I|}}$ such that $u^i_j = 1$ for all $i \in I, j \in J$.
	\end{lemma}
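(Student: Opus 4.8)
The plan is to prove Lemma \ref{lemma:intersection} by directly unpacking the definition of the Walsh-type vectors $v^i$ in terms of the bit-string matrix $M$. Recall that the rows of $M$ are indexed by the $m = 2^{\log m}$ binary strings of length $\log m$, and $v^i$ is the $i$-th column; so for a coordinate (row) $j \in [m]$ corresponding to the bit-string $b(j) = (b_1(j), \ldots, b_{\log m}(j))$, we have $v^i_j = b_i(j)$, and correspondingly $(v^i)^c_j = 1 - b_i(j)$.

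First I would translate the condition ``$u^i_j = 1$ for all $i \in I$'' into a condition purely on the bit-string $b(j)$. For each $i \in I$ let $c_i \in \{0,1\}$ be the bit such that choosing $u^i = v^i$ forces $b_i(j) = 1$ (i.e. $c_i = 1$) and choosing $u^i = (v^i)^c$ forces $b_i(j) = 0$ (i.e. $c_i = 0$). Then ``$u^i_j = 1$ for all $i \in I$'' is exactly the statement ``$b_i(j) = c_i$ for all $i \in I$'', which prescribes the values of the $|I|$ coordinates of $b(j)$ indexed by $I$ and leaves the remaining $\log m - |I|$ coordinates completely free. Hence the set $J$ of good coordinates is in bijection with the set of binary strings of length $\log m$ whose restriction to $I$ equals the prescribed pattern $(c_i)_{i \in I}$, of which there are exactly $2^{\log m - |I|} = m / 2^{|I|}$. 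Setting $J$ to be precisely this set of rows completes the argument.

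There is essentially no hard step here; the only thing to be a little careful about is bookkeeping the correspondence between the ``$v^i$ vs.\ $(v^i)^c$'' choice and the induced bit constraint, and making sure the count $m/2^{|I|}$ is an integer (which it is, since $|I| \le \log m$ and $m$ is a power of two). I would present it in two or three sentences: define $b(j)$, observe $v^i_j = b_i(j)$, observe the complement flips this bit, note that the simultaneous constraints fix $|I|$ bits and free the rest, and conclude by counting. No nontrivial lemma is needed beyond elementary counting of binary strings.
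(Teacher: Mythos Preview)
Your proposal is correct and is essentially the same argument as the paper's, just phrased combinatorially rather than probabilistically: the paper picks a uniformly random coordinate $\bj$, observes that the row $(v^1_{\bj},\ldots,v^{\log m}_{\bj})$ is uniform in $\{0,1\}^{\log m}$ so the bits are independent, and computes $\Pr(\bigwedge_{i\in I} u^i_{\bj}=1)=2^{-|I|}$; you instead directly count the rows whose restriction to $I$ matches the prescribed pattern. Both are the same ``fix $|I|$ bits, the rest are free'' observation.
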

	
	\begin{proof}
		Let $\bj$ be uniformly random in $[m]$. Notice that the vector $(v^1_{\bj}, v^2_{\bj}, \ldots, v^{\log m}_{\bj})$ is a random row of the matrix $M$, and hence it is a point uniformly distributed in $\{0,1\}^{\log m}$. Due to the product structure of this set, this implies that the random variables $\{v^i_{\bj}\}_i$ are all independent, and each take value $1$ with probability $1/2$. Moreover, this is true if we complement some of these variables, i.e., replace  $v^i$ for $(v^i)^c$ for some indices $i$ (notice we do now allow $v^i$ and $(v^i)^c$ to be simultaneously in the set).
		Therefore, this gives that the random variables $\{u^i_{\bj}\}_{i \in I}$ are independent, and thus the number of coordinates where they intersect is
		\begin{align*}
			m \cdot \Pr\left(\bigwedge_{i \in I} (u^i_{\bj} =  1)\right) = m \cdot \prod_{i \in I} \Pr(u^i_{\bj} =  1) = \frac{m}{2^{|I|}}.
		\end{align*}
		This concludes the proof.
	\end{proof}
	
	The instance for \GLB is then constructed randomly as follows. There are $m$ machines and $\frac{p}{2} = \frac{\log m}{2}$ jobs. 	For $i = 1, \ldots, \frac{\log m}{2}$, let $\bu^i$ be a random vector that equals $v^i$ with probability $1/2$ and equals its complement $(v^i)^c$ with probability $1/2$. Then for each $i \in [\frac{\log m}{2}]$, we have one job with only one processing option of load vector $\bu^i$, and one job with 2 processing options of load vectors $v^i$ and $(v^i)^c$. These jobs are then presented in random order. 
	
	Now we analyze this instance. The optimal offline solution can be upper bounded $\OPT \le \frac{p}{2} m^{1/p}$, since opt can process each job $\{v^i, (v^i)^c\}$ using the option that equals the complement of $\bu^i$, which gives total load vector $\frac{p}{2} \ones$, of $\ell_p$-norm $\frac{p}{2} m^{1/p}$. This is also tight, namely $\OPT \ge \frac{p}{2} m^{1/p}$: by adding up all the loads of the jobs over all the machines, we see that any solution has total $\ell_1$ load exactly $\frac{m p}{2}$, and since $\|x\|_p \ge m^{\frac{1}{p} - 1} \|x\|_1$ (see Section \ref{sec:road}) the claimed lower bound follows. 
	
	However, it is hard for the online algorithm to ``unmatch'' the processing of $\{v^i, (v^i)^c\}$ with $\bu^i$, even in the random-order model. More precisely, consider any online algorithm and let $\bX_i$ be the indicator variable that the algorithm chose to process $\{v^i, (v^i)^c\}$ using the option that \textbf{equals} $\bu^i$. Since the instance is presented in random order, with probability $1/2$ the job $\{v^i, (v^i)^c\}$ comes before the job $\bu^i$; in this case, the random variable $\bu^i$ is independent from how the algorithm processes job $\{v^i, (v^i)^c\}$, and so with probability 1/2 it equals the processing option that the algorithm chose for job $\{v^i, (v^i)^c\}$. Thus, $\E \bX_i \ge 1/4$, and hence $\E \sum_{i=1}^{p/2} \bX_i \ge \frac{p}{8}$ (i.e. on average the algorithm makes $\frac{p}{8}$ ``mistakes''). Moreover, by employing Markov's inequality to $\frac{p}{2} - \sum_i \bX_i$ we obtain that $\sum_i \bX_i \ge \frac{p}{16}$ with probability at least $1/7$.
	
	Now we see how these mistakes factor into the load of the algorithm. When $\bX_i = 0$, the processing of $\bu^i$ and $\{v^i, (v^i)^c\}$ match, adding up to load vector $\ones$, and when $\bX_i = 1$ their load vector adds up to $2 \bu^i$. Thus, the load of the algorithm is 
	\begin{align*}
		\algo = \bigg\| \sum_i (1-\bX_i) \ones + \sum_i \bX_i 2 \bu^i  \bigg\|_p.
	\end{align*}
	Using Lemma \ref{lemma:intersection} above, we see that this total load vector has at least $\frac{m}{2^{\sum_i \bX_i}}$ coordinates with value $\sum_i (1-\bX_i) + 2 \sum_i \bX_i = \frac{p}{2} + \sum_i \bX_i$. Thus,
	\begin{align*}
		\algo \ge \left(\frac{m}{2^{\sum_i \bX_i}} \left(\frac{p}{2} + \sum_i \bX_i\right)^p \right)^{1/p} = \frac{p m^{1/p}}{2} \left(1 + \frac{2 \sum_i \bX_i}{p}\right) \frac{1}{2^\frac{{\sum_i \bX_i}}{p}}.
	\end{align*}
	Notice that always $\frac{\sum_i \bX_i}{p} \le \frac{1}{2}$ and recall that with probability at least $1/7$ we have $\frac{\sum_i \bX_i}{p} \ge \frac{1}{16}$. Moreover, one can verify that the function $x \mapsto (1+2x)\frac{1}{2^x}$ is increasing in the interval $[\frac{1}{16}, \frac{1}{2}]$, so its minimum is achieved at $x=\frac{1}{16}$ with value $\ge 1.07$. Thus, with probability at least $1/7$, the algorithm incurs load at least $1.07 \frac{pm^{1/p}}{2}$. With the remaining probability, the algorithm incurs load at least that of the offline $\OPT$, which is at least $\frac{pm^{1/p}}{2}$; thus, the expected total load of the algorithm is at least $$\frac{1}{7} \frac{1.07\, pm^{1/p}}{2} + \frac{6}{7} \frac{pm^{1/p}}{2} = 1.01\, \frac{pm^{1/p}}{2}.$$ This concludes the proof of Theorem \ref{thm:lb}.


	\section{Proof of Lemma \ref{lemma:smoothened}} \label{app:smoothened}
		To simplify the notation we omit subscripts in $\psi_{\e,p}$. 
		
		\paragraph{Property \eqref{eq:propSmooth1}.} The upper bound follows directly from triangle inequality: $\psi(u) \le \frac{p}{\epsilon} (\|\mathbf{1}\|_p + \|\frac{\epsilon u}{p}\|_p) - \frac{p}{\epsilon} = \|u\|_p + \frac{p (m^{1/p} - 1)}{\epsilon}$. For the lower bound $\psi(u) \ge \|u\|_p$ for $u \in \R^m_+$, define $v = \frac{p}{\e} \frac{u}{\|u\|_p}$. Since, $\frac{p}{\e} \ones + u$ is pointwise at least $v + u$, and the later is non-negative, we have $$\bigg\|\frac{p}{\e} \ones + u\bigg\|_p\ge \|v + u\|_p = \bigg\|\left(\frac{p}{\e \|u\|_p} + 1\right) u\bigg\|_p = \|u\|_p  + \frac{p}{\e \|u\|_p} \|u\|_p \ge \|u\|_p + \frac{p}{\e}.$$ Since $\psi(u) = \|\frac{p}{\e} \ones + u\|_p - \frac{p}{\e}$ the result follows.

		%
		
		
		\paragraph{Property \eqref{eq:propSmooth2}.}
		
				Writing the partial derivatives:
		\begin{align}
			\frac{\partial \psi}{\partial x_i}(u) = \frac{\left(1 + \frac{\e u_i}{p} \right)^{p-1}}{\left( \sum_i \left(1 + \frac{\e u_i}{p}\right)^p \right)^{1-\frac{1}{p}}}. \label{eq:derPsi}
		\end{align}
		Since $u_i \ge 0$ and $v_i \in [0,1]$, we have $$\left(1 + \frac{\e u_i}{p} \right)^{p-1} \le \left(1 + \frac{\e (u_i + v_i)}{p} \right)^{p-1} \le \left(1 + \frac{\e v_i}{p} \right)^{p-1} \left(1 + \frac{\e u_i}{p} \right)^{p-1} \le e^\e \left(1 + \frac{\e u_i}{p} \right)^{p-1},$$ where in the last inequality we use the fact that for all $x \ge 0$ we have $1+x \le e^x$. Moreover, the same holds if we change the powers from $p-1$ to $p$. We can then use these inequalities in the numerator and denominator of \eqref{eq:derPsi} to obtain $\frac{\partial\psi}{\partial x_i}(u + v) \le e^\e \cdot \frac{\partial \psi}{\partial x_i}(u)$ and $\frac{\partial\psi}{\partial x_i}(u + v) \ge \frac{1}{e^{\e (1-1/p)}} \frac{\partial\psi}{\partial x_i}(u) \ge e^{-\e} \cdot \frac{\partial\psi}{\partial x_i}(u)$. This concludes the proof.


%


\section{Proof of Lemma \ref{lemma:linLp}} \label{app:proofLinLp}

		We start with the following optimal bound on the modulus of strong smoothness of the square of the $\ell_p$ norm.

		\begin{lemma}[\cite{AMS}] \label{lemma:nemirovski}
			For $p \in [2, \infty)$ and $x \in \R^m \setminus \{0\}$. Then for arbitrary $x,y \in \R^m$,
			\begin{align*}
				\|x\|_p^2 + \ip{h(x)}{y} \le \|x+y\|_p^2 \le \|x\|_p^2 + \ip{h(x)}{y} + (p-1) \|y\|_p^2,
			\end{align*}
			where the vector $h(x)$ is defined as $h(x) = 2 \|x\|_p^{2-p} \left(|x_i|^{p-2} x_i \right)_i$.
		\end{lemma}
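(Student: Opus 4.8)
The plan is to recognize the claimed vector $h(x)$ as the gradient of $\Phi(x) := \|x\|_p^2$: differentiating, $\partial_i\Phi(x) = \frac{2}{p}\|x\|_p^{2-p}\cdot p\,|x_i|^{p-1}\mathrm{sgn}(x_i) = 2\|x\|_p^{2-p}|x_i|^{p-2}x_i$, which is exactly $h(x)_i$. Once this is noticed, the lower bound $\|x\|_p^2 + \langle h(x),y\rangle \le \|x+y\|_p^2$ is just the first-order convexity inequality for the convex function $\Phi$, so all the work is in the upper bound — i.e., in showing that $\Phi$ is $L$-smooth with respect to $\|\cdot\|_p$ for $L = 2(p-1)$ (the remainder being $\frac{L}{2}\|y\|_p^2 = (p-1)\|y\|_p^2$). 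I would reduce this to a pointwise Hessian bound. Since $p\ge 2$, the map $t\mapsto |t|^{p-1}\mathrm{sgn}(t)$ is $C^1$ on $\R$, so $\Phi\in C^2(\R^m\setminus\{0\})$, and for fixed $x\neq 0$ and a direction $y$ whose segment $[x,x+y]$ avoids the origin, second-order Taylor with integral remainder gives
\[
\Phi(x+y)=\Phi(x)+\langle\nabla\Phi(x),y\rangle+\int_0^1(1-t)\,\big\langle\nabla^2\Phi(x+ty)\,y,\,y\big\rangle\,dt .
\]
Thus it suffices to prove $\big\langle\nabla^2\Phi(z)v,v\big\rangle\le 2(p-1)\|v\|_p^2$ for all $z\neq0$ and $v\in\R^m$; then the integral is at most $2(p-1)\|y\|_p^2\int_0^1(1-t)\,dt=(p-1)\|y\|_p^2$, which is the claim.

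The heart of the argument is the Hessian identity. Writing $N_p=\|z\|_p^p=\sum_k|z_k|^p$ and differentiating $\partial_i\Phi=2N_p^{2/p-1}|z_i|^{p-2}z_i$ once more yields $\partial_i\partial_j\Phi = 2(p-1)N_p^{2/p-1}\delta_{ij}|z_i|^{p-2} + 2(2-p)N_p^{2/p-2}|z_i|^{p-2}z_i\,|z_j|^{p-2}z_j$, hence
\[
\big\langle\nabla^2\Phi(z)v,v\big\rangle = 2(p-1)\|z\|_p^{2-p}\sum_i|z_i|^{p-2}v_i^2 \;-\; 2(p-2)\,\|z\|_p^{2-2p}\Big(\sum_i|z_i|^{p-1}\mathrm{sgn}(z_i)\,v_i\Big)^{2}.
\]
The off-diagonal contributions together with the ``extra'' diagonal factor coming from the chain rule assemble into a single perfect square, and since $p\ge2$ its coefficient $-2(p-2)$ is nonpositive, so that term may be discarded. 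It remains to bound the diagonal term: Hölder's inequality with conjugate exponents $\frac{p}{p-2}$ and $\frac{p}{2}$ gives $\sum_i|z_i|^{p-2}v_i^2\le\big(\sum_i|z_i|^p\big)^{(p-2)/p}\big(\sum_i|v_i|^p\big)^{2/p}=\|z\|_p^{p-2}\|v\|_p^2$, and multiplying by $\|z\|_p^{2-p}$ yields $\big\langle\nabla^2\Phi(z)v,v\big\rangle\le 2(p-1)\|v\|_p^2$, as needed. (For $p=2$, $\Phi$ is a quadratic with Hessian $2I$ and the inequality is an exact identity, so that case needs no special handling; and this same two-coordinate analysis, with $x=(1,1,0,\dots)$ and $y=(t,-t,0,\dots)$, shows the constant $p-1$ cannot be improved, matching the ``optimal'' in the statement.)

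The main obstacle is the Hessian identity itself: one must carry the several powers of $N_p$ and the signs $\mathrm{sgn}(z_i)$ correctly and observe that precisely the combination $\big(\sum_i|z_i|^{p-1}\mathrm{sgn}(z_i)v_i\big)^2$ appears with the sign that makes the hypothesis $p\ge2$ effective — this is exactly where ``strong smoothness with constant $p-1$'' comes from. Everything else is routine: the reduction to the Hessian bound via Taylor's formula and the single application of Hölder. The only mild nuisance is that $\Phi$ fails to be $C^2$ at the origin, so the segment $[x,x+y]$ could pass through $0$; since for fixed $x\neq0$ the set of such $y$ is a single ray and both sides of the desired inequality are continuous in $y$ (with $h(x)$ fixed), this is dispatched by proving the bound for $y$ off that ray and passing to the limit.
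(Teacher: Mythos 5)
Your proposal is correct, but there is nothing in the paper to compare it against: the paper states Lemma~\ref{lemma:nemirovski} as an imported result, attributing it to~\cite{AMS} and giving no proof, only using it as a black box to derive Lemma~\ref{lemma:linLp}. What you have written is therefore a self-contained replacement for that citation rather than an alternative to an in-paper argument.

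As for the proof itself: identifying $h(x)$ as $\nabla\Phi(x)$ for $\Phi=\|\cdot\|_p^2$ is exactly right, the lower bound is indeed first-order convexity of $\Phi$ (composition of the convex nonnegative $\|\cdot\|_p$ with the nondecreasing convex $t\mapsto t^2$), and your Hessian computation is correct, including the sign of the rank-one term $-2(p-2)\|z\|_p^{2-2p}\bigl(\sum_i|z_i|^{p-1}\mathrm{sgn}(z_i)v_i\bigr)^2$, which is what makes $p\ge2$ bite. The H\"older step with exponents $\frac{p}{p-2}$ and $\frac{p}{2}$ is the standard closing move, and your handling of the degenerate ray through the origin by continuity is adequate since $\Phi\in C^2(\R^m\setminus\{0\})$ for all $p\ge2$ (the derivative of $t\mapsto|t|^{p-1}\mathrm{sgn}(t)$ is $(p-1)|t|^{p-2}$, continuous at $0$). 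Your two-coordinate tightness check is also correct: with $x=(1,1,0,\dots)$ and $y=(t,-t,0,\dots)$ one has $\ip{h(x)}{y}=0$ and $\|x+y\|_p^2=\|x\|_p^2+(p-1)\|y\|_p^2+O(t^4)$, confirming the constant $p-1$ is optimal. In short, this is a clean, correct proof of a result the paper only cites.
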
			
		
	Collecting the $\|x\|_p^2$ terms, the upper bound of this inequality becomes 
	\begin{align}
		\|x+y\|_p^2 \le \|x\|_p^2 \left(1 + \frac{\ip{h(x)}{y}}{\|x\|_p^2} + \frac{(p-1)\|y\|^2_p}{\|x\|^2_p} \right). \label{eq:nemi2}
	\end{align}
	But since $\sqrt{.}$ is concave, for any $\alpha$ we have (using linearization at $1$) $\sqrt{1+\alpha} \le 1 + \frac{\alpha}{2}$ (this can also be checked directly by squaring both sides). Thus, taking square roots on both sides of \eqref{eq:nemi2} and employing this bound we get 
	$$\|x + y\|_p \le \|x\|_p + \frac{\ip{h(x)}{y}}{2 \|x\|_p} + \frac{(p-1)}{2} \frac{\|y\|_p^2}{\|x\|_p}.$$
	
	Defining $g(x) = \frac{h(x)}{2 \|x\|_p}$ we see that this expression is exactly the one in Lemma \ref{lemma:linLp}. To conclude the proof we show that for $x \ge 0$ we have $\|g(x)\|_q \le 1$, or equivalently $\|h(x)\|_q \le 2 \|x\|_p$: noticing that $q = p/(p-1)$,
			\begin{align*}
				\|h(x)\|_q = \frac{2}{\|x\|_p^{p-2}} \left(\sum_i x_i^{(p-1) q} \right)^{1/q} = \frac{2}{\|x\|_p^{p-2}} \left(\sum_i x_i^{p} \right)^{(p-1)/p} = \frac{2 \|x\|_p^{p-1}}{\|x\|_p^{p-2}} = 2 \|x\|_p.
			\end{align*}			
			
			This concludes the proof of Lemma \ref{lemma:linLp}.


\section{Proof of Lemma \ref{lemma:refinedGuarantee}} \label{app:refinedGuarantee}
	
We reproduce the proof of \cite{caragiannis} for convenience. Let $S^t = C^1 \bar{x}^1 + \ldots + C^t \bar{x}^t$ (and $S^0 = 0$). By the greedy criterion, for each $t$, we have $\|S^t\|_p^p - \|S^{t - 1}\|_p^p \le \|S^{t -1} + C^t \bar{x}^t\|_p^p - \|S^{t - 1}\|_p^p$. Adding all these inequalities for $t \ge \tau$ we get
	\begin{align}
 		\|S^n\|_p^p - \|S^{\tau-1}\|_p^p \le \sum_{t \ge \tau} \left(\|S^{t -1} + C^t \bar{x}^t\|_p^p - \|S^{t - 1}\|_p^p\right) \le \sum_{t \ge \tau}\left(\|S^n + C^t \bar{x}^t\|_p^p - \|S^n\|_p^p\right), \label{eq:refGuarantee1}
 	\end{align}
 	 where the last inequality follows from the fact that $x \mapsto (x + a)^p - x^p$ is non-decreasing over $[0, \infty)$ for $a \ge 0$.
	Employing Lemma 3.1 of \cite{caragiannis} 
	we get that the right-hand side is at most
	\begin{align}
		\textrm{RHS} \le \bigg\|S^n + \sum_{t \ge \tau} C^t \bar{x}^t\bigg\|_p^p - \|S^n\|_p^p \le \left(\|S^n\|_p + \bigg\|\sum_{t \ge \tau} C^t \bar{x}^t\bigg\|_p\right)^p - \|S^n\|_p^p . \label{eq:refGuarantee2}
	\end{align}
%

	If $\|S^n\|_p^p \le 2 \|S^{\tau-1}\|_p^p$ then Lemma \ref{lemma:refinedGuarantee} clearly holds. Otherwise $\|S^{\tau-1}\|_p^p \le \frac{\|S^n\|_p^p}{2}$, which used together with inequalities \eqref{eq:refGuarantee1} and \eqref{eq:refGuarantee2} gives $$\frac{3}{2} \|S^n\|_p^p \le \left(\|S^n\|_p + \bigg\|\sum_{t \ge \tau} C^t \bar{x}^t\bigg\|_p\right)^p.$$ Taking $p$-th roots and reorganizing we get $\|S^n\|_p \le \frac{1}{(\frac{3}{2})^{1/p} - 1} \|\sum_{t \ge \tau} C^t \bar{x}^t\|_p$. Using the inequality $e^x \ge 1 + x$, we have that $\frac{1}{(\frac{3}{2})^{1/p} - 1} \le \frac{p}{\ln (3/2)}$, which gives the desired result. This concludes the proof of the lemma.


	\section{Proof of Lemma \ref{lemma:greedyGrad}} \label{app:greedyGrad}
	
    	Let $\psi := \psi_{\e,p}$ to simplify the notation. Integrating and using Property \eqref{eq:propSmooth2} we get $$\psi(u + v) = \psi(u) + \int_0^1 \ip{\nabla \psi(u + \lambda v)}{v} d\lambda ~\in~ \psi(u) + e^{\pm \e} \ip{\nabla \psi(u)}{v},$$ and similarly for $v'$. Thus,
    	\begin{align*}
    		\ip{\nabla \psi(u)}{v} \le e^\e \left[\psi(u + v) - \psi(u)\right] \le e^\e \left[\psi(u + v') - \psi(u)\right] \le e^{2\e} \ip{\nabla \psi(u)}{v'}.
    	\end{align*}	
		This concludes the proof. 
		
\end{document}